\newtheorem{lemma}{Lemma}
\newcommand{\la}{\langle}
\newcommand{\ra}{\rangle}
\newcommand{\Sw}{{\rm Swap}}
\newcommand{\cO}{{\cal O}}
\newcommand{\expec}{\mathbb{E}}
\newcommand{\mU}{{\mathcal{U}}}
\newcommand{\mM}{{\mathcal{M}}}
\newcommand{\rd}{{\mathrm d}}
\newcommand{\RR}{{\mathbb R}}
\DeclareMathOperator{\tr}{{\mathrm tr}}
\DeclareMathOperator{\Tr}{{\mathrm Tr}}
\begin{document}

\title{How Dynamical Quantum Memories Forget}

\author{Lukasz Fidkowski}
\affiliation{Department of Physics, University of Washington, Seattle, WA 98195-1560, USA}

\author{Jeongwan Haah}
\affiliation{Microsoft Quantum and Microsoft Research, Redmond, WA 98052, USA}

\author{Matthew B.~Hastings}
\affiliation{Station Q, Microsoft Research, Santa Barbara, CA 93106-6105, USA}
\affiliation{Microsoft Quantum and Microsoft Research, Redmond, WA 98052, USA}

\begin{abstract}
Motivated by recent work showing that 
a quantum error correcting code can be generated by hybrid dynamics of unitaries and measurements, 
we study the long time behavior of such systems. 
We demonstrate that even in the ``mixed'' phase, 
a maximally mixed initial density matrix is purified on a time scale equal to the Hilbert space dimension 
(i.e., exponential in system size), 
albeit with noisy dynamics at intermediate times
which we connect to Dyson Brownian motion.
In contrast, we show
that free fermion systems --- 
i.e., ones where the unitaries are generated by quadratic Hamiltonians 
and the measurements are of fermion bilinears --- 
purify in a time quadratic in the system size.  In particular, a volume law phase for the entanglement entropy cannot be sustained in a free fermion system.
\end{abstract}

\maketitle

Recently it has been argued that a low-dimensional (even a one-dimensional) quantum system which mixes unitary evolution by local circuits with local measurements can act as a quantum memory~\cite{Fisher1,Ruhman,Fisher, Smith, Gullans, Altman, Fan}.  If one records the outcomes of the measurements, this process can protect nontrivial quantum information.  Here, we investigate the long-time dynamics of this process to understand how the system ultimately ``forgets,'' {\emph{i.e.,}} if the system is used to store quantum information, how the information necessarily is lost by these measurements.

To study this long time dynamics, we ignore the spatial structure.  The system consists of just a single Hilbert space of high dimension $N$, with $N$ even.  Our model consists of alternating two different steps: first, a unitary evolution, followed by a measurement of a single bit of information\footnote{Of course, one might generalize this to a two-outcome POVM.  Since such a POVM is equivalent to a projective measurement in a larger Hilbert space, the POVM model is equivalent to our model, except for some change in the ensemble from which we choose the unitary evolution.  We leave this for future work.}, represented by a rank $N/2$ projector.  We can also choose to conjugate the measurements by the unitary, and so the model can be described by measuring a single bit of information at each step, with the measurement basis changing each time.  Thus, if we evolve by unitary $U_1$, then measure projector $P_1$, then evolve by unitary $U_2$, then measure projector $P_2$, this is equivalent, up to an overall unitary, to measuring projector $U_1^\dagger P_1 U_1$, followed by measuring projector $U_1^\dagger U_2^\dagger P_2 U_2 U_1$.  We keep track of the quantum trajectories by writing down the measurement outcomes, so in particular pure states always evolve to pure states along such trajectories.  

We consider two different cases, that we term ``many-body" and ``free fermion".  In the many-body case, the unitaries as chosen to be Haar random.  The term ``many-body" is a bit of a misnomer: we have some fixed high-dimensional Hilbert space, perhaps formed by tensoring many qubits, so a better term might be ``high-dimensional single body".  Nevertheless, we persist in using the term many-body; in particular, one may hope that sufficiently deep quantum circuits for a tensor product Hilbert space can be well-approximated by our Haar random measurements~\cite{brandao2016local,HarrowMehraban2018,Haferkamp2020}.
In the free fermion case, the Hilbert space is a Fock space of fermions, and measurements are only allowed to be of fermion bilinears.   

In the many-body case we will find that the system preserves information up to a time scale proportional to $N$ (which, for a tensor product Hilbert space, is exponentially large in the number of degrees of freedom).  However, up to that time, we will find large sample-to-sample fluctuations in how well information is preserved.
In contrast, in the free fermion case, on a system with $n$ modes (hence, $2^n$-dimensional Hilbert space), we prove that the purification time is $\sim n^2$.  While not exponential, this time is still slower by a factor of $n$ than the purification time with optimally chosen measurements, which is only proportional to $n$.  Indeed, if measurements can be done in parallel, one can purify a many-body or free fermion system in a single step with $n$ commuting measurements, such as measuring the Pauli $Z$ operator on each of the $n$ qubits.

\begin{figure}
\includegraphics[width=4in]{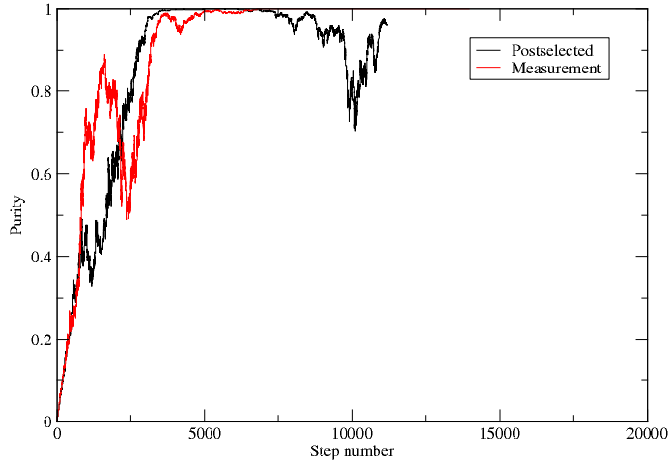}
\caption{Purity as a function of number of steps for $N=2000$ dimensional Hilbert space, starting with maximally mixed density matrix.  Curves show postselected and measurement cases for one run.  At late time, the red measurement curve converges exponentially to $1$ and is no longer visible on the plot.}
\label{fignum}
\end{figure}

To understand how well the system preserves quantum information, we will use a reference system.  We will start with a state that is maximally entangled between the system and the reference, so that the reduced density matrix $\rho$ on the system is maximally mixed.  We then study the evolution of this reduced density matrix under a sequence of measurements~\cite{Gullans}.  Before describing results, it is helpful to see the output of a numerical simulation of the many-body case for a Hilbert space of dimension $N=2000$, shown in~\cref{fignum}.  At each step we choose a rank $N/2$ projector randomly, by choosing $P=U P_0 U^\dagger$ for some fixed projector $P_0$ with $U$ a unitary chosen from Haar measure.  The red ``measurement" curve is the physically meaningful case: we compute the probability $\tr(P_0 \rho)$ that the measurement has a given outcome, and with that probability replace the density matrix $\rho$ with $(P_0 \rho P_0) / \tr(P_0 \rho) $, otherwise replacing it with $(1-P_0) \rho (1-P_0) / \tr((1-P_0)\rho)$.  The black ``post-selection" curve is not physically meaningful; here we always replace $\rho$ with $P_0 \rho P_0 / \tr(P_0 \rho)$.  The quantity being plotted is the ``purity" of $\rho$, defined to be $\tr(\rho^2)$.

One sees that the dynamics has roughly three regimes.
First, at early times in both cases the purity grows linearly with only small fluctuations.
Then, at intermediate times, the purity has noisy dynamics, and can actually {\it decrease}.  Indeed, one can observe long fluctuations in which the purity decreases for many steps.  Finally, the purity becomes close to $1$ and converges to $1$.
The long time convergence is quite different between the measurement and postselected case, with an exponential convergence for measurement, but not for postselection.  This will follow from \cref{pconv,mconv}.

Perhaps the most interesting question is how the decrease in purity can occur.  For one thing, it seems strange as it suggests that a random measurement can actually restore quantum information that has been lost.  Indeed, it is possible for a measurement to reduce purity (or increase entropy), but importantly, the square-root purity {\it averaged over measurement outcomes} cannot decrease after a  measurement (and similarly, the entropy averaged over measurement cannot increase).  We show this in~\cref{entineq}.

Still, though, it may be surprising that large fluctuations occur for random choices of projector, and that they are still present even for $N=2000$.  One heuristic explanation as to why they are present is that as the purity increases, the system starts to get a few larger eigenvalues, and this reduces the tendency of the system to self-average.  Our analysis in~\cref{secMBD} shows mathematically what happens: the average purity increases in a single step by an amount proportional to order $1/N$.  There are terms which are of order $1/N^2$ and smaller, but we ignore those.  At the same time, the variance of the purity after a single step is of order $1/N$ also, i.e. the root-mean square is $1/\sqrt{N}$.
Thus, it suggests the picture that the purity obeys a biased diffusion equation; the bias and the noise both depend on purity.  At a time scale of order $N$, both bias and noise are equally important; that is, the Peclet number is of order $1$, independent of $N$ at this time scale.
We show that for a maximally mixed state, as well as a nearly pure state, the variance is negligible, which explains why the early and late time dynamics are approximately noiseless.

This picture that the purity obeys a biased diffusion equation is not completely correct: in general, the drift and diffusion of the purity depend on traces of higher powers so one cannot write down a Markovian evolution for the purity alone.
In the special case that the density matrix has rank $2$, one can describe the dynamics just in terms of the purity: see \cref{figrank2} and \cref{SD}.
For low rank density matrices, we are able to describe the dynamics of the set of eigenvalues by a diffusion equation with drift.

\begin{figure}
\centering
\includegraphics[width=4in]{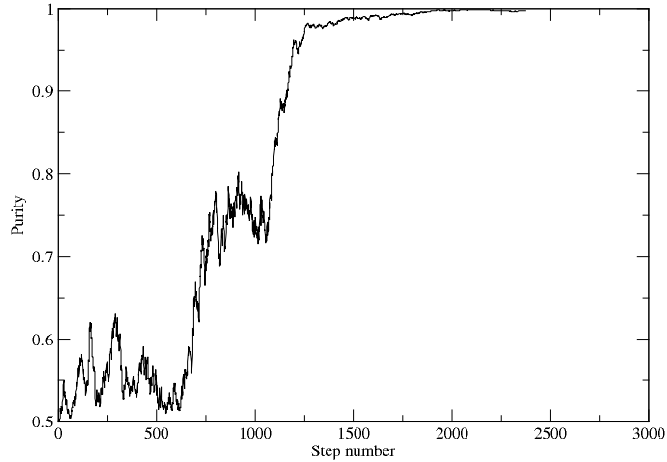}
\caption{Purity as a function of number of steps for $N=2000$ dimensional Hilbert space starting with rank $2$ density matrix with purity $1/2$ and performing random measurements.}
\label{figrank2}
\end{figure}

In the case of free fermion dynamics, we consider two cases: (1) particle-number conserving dynamics and (2) general free fermion evolution, which may involve pairing.  In both cases we use the formalism of Gaussian states \cite{Bravyi}.  The free fermion unitaries are chosen from $U(N)$ or $SO(2N)$ respectively, randomly with respect to Haar measure.  We use a proxy for the second Renyi entropy which is only well defined for Gaussian states but has the advantage of being much easier to work with.  This proxy entropy is equal to the second Renyi entropy for the maximally mixed state and all pure Gaussian states, and stays within fixed positive bounds of it for all Gaussian states, so understanding this proxy entropy is just as good as understanding the actual second Renyi entropy (and hence purity) insofar as the purification dynamics goes.  Our main result is that if we start with a state on $n$ modes whose proxy entropy is $s\, n \log 2$ --- i.e., the entropy density is $s \log 2$ --- then measuring a single mode must decrease the proxy entropy by at least an amount of order $s^2$, when averaged over the two possible measurement outcomes.  This leads to a rigorous bound on the purification time of order $n^2$, showing that this maximally entangling free fermion system is in a purifying phase ~\cite{Gullans, Gullans2}.  We expect, but do not prove, that any free fermion system, no matter how non-local, will purify in a time polynomial with $n$.  Our result is also consistent with the lack of a volume law phase in free fermion unitary-measurement dynamics \cite{DeLuca, Lucas, Khemani, Skinner}.

The paper is organized as follows.  In \cref{secMBD} we give results for average change in purity and fluctuations in purity for the many-body case, and analyze these results.  
In \cref{sec:Dyson} we give a Brownian motion picture for eigenvalues of the density matrix, valid when the rank of the matrix is small compared to $N$.
In \cref{secFFD} we discuss the free-fermion case.  Finally, we give a brief discussion in \cref{discuss}.  In \cref{entineq}, we prove that entropy, averaged over measurement outcomes, decreases after measurement.  In \cref{entineq} we prove general entanglement inequalities.  In \cref{SD} we derive necessary formulas for averages of products of traces over choices of unitary matrices, using a set of ``Schwinger-Dyson equations"\cite{hastings2007random}, for which we give a self-contained derivation.  There are of course many tools one could use, such as the Weingarten calculus, and other readers may prefer that.  The Schwinger-Dyson equations however have the advantage of being relatively simple to use and of naturally organizing the result in powers of $N^{-1}$.

{\it Note:} After finishing this preprint, we became aware of Ref. \cite{LiFisher} which found a similar exponential time purifying behavior using a capillary-wave description of domain walls in an effective statistical mechanics model describing the mixed phase of a 1d local hybrid measurement-unitary circuit.

\section{Many-Body Dynamics}
\label{secMBD}

\subsection{Post-selected case}
We wish to compute the entropy of
$$\rho'=\tr(P \rho)^{-1} P\rho P.$$ where $\rho$ is a density matrix on an $N$ dimensional Hilbert space, and $P=UP_0 U^\dagger$ is a random rank $N/2$ projector.  Here $P_0$ is a fixed rank $N/2$ projector and $U$ a Haar-random unitary.  For computational purposes, we use the `purity', {\emph{i.e.}} the trace of the square of the density matrix, as our entropy measure.  The purity is maximal for a pure state, where it is $1$, and achieves its minimum of $1/N$ for a maximally mixed state.

At large $N$ we expect $\tr P\rho $ to be close to $1/2$, so we write $\tr P \rho =1/2+\delta$.  Then
\begin{align}
\label{series}
\expec \left[\tr({\rho'}^2)\right] =\expec \left[\frac{\tr P\rho P \rho}{(\tr P\rho)^2}\right]=\expec[4\tr P\rho P\rho-16(\tr P \rho P \rho) \delta +48(\tr P \rho P \rho) \delta^2 +O(\delta^3)].
\end{align}
where $\expec$ denotes the average over the unitaries $U$.  Let us estimate how big $\delta$ can be.  Using \cref{newap2} of \cref{SD} we have that

\begin{align}\label{delsquared}
\expec[\delta^2]=\expec \left[(\tr P \rho)(\tr P \rho)-\frac{1}{4}\right]=\frac{1}{N}\left(\frac{1}{2} \tr \rho^2 - \expec[\tr P\rho P\rho]\right),
\end{align}
so $\expec[\delta^2]$ is $O(N^{-1})$ and thus with high probability $\delta$ is smaller than $N^{-1/2+\epsilon}$ for any $\epsilon>0$.

In fact, we can give even better bounds on the fluctuations in $\delta$ using concentration of measure.  We have $\expec[\delta]=0$.  Regard $\delta=\tr P\rho -1/2=\tr U P U^\dagger \rho-1/2$ as a function of $U$. By a triangle inequality, this function is $2$-Lipschitz using the operator norm as a metric for $U$\footnote{Proof: we wish to bound $|\tr (U P U^\dagger \rho)-\tr(V P V^\dagger \rho)|$
for unitary $U,V$ given a bound on the operator norm $\Vert U-V\Vert$.  
We will prove the bound in a more general case allowing $U,V$ to be non-unitary but requiring that $\Vert U \Vert,\Vert V \Vert \leq 1$.
We have $|\tr(U P U^\dagger \rho)-\tr(V P V^\dagger \rho)| \leq |\tr((U-V) P U^\dagger \rho)| + |\tr(U P (U-V)^\dagger \rho)|+|\tr((U-V) P (U-V)^\dagger \rho|$.  Consider the first term.  We have $\Vert (U-V) P U^\dagger \Vert \leq \Vert U-V \Vert$ so the first term is bounded by $\Vert U-V \Vert$.  The second term is bounded similarly.  The third term is bound by $\Vert U-V\Vert^2$.  So, 
$|\tr(U P U^\dagger \rho)-\tr(V P V^\dagger \rho)|\leq 2 \Vert U-V\Vert+\Vert U-V\Vert^2$.  We claim that we can ignore the last $\Vert U-V \Vert^2$ term; indeed, given any $U,V$, and any integer $k$ consider the sequence of operators $U_0=U,U_1,U_2,\ldots,U_k=V$ given by linearly interpolating between $U$ and $V$ so that $\Vert U_i-U_{i+1}\Vert=\Vert U-V\Vert/k$.  In the limit $k\rightarrow \infty$, the $\Vert U-V\Vert^2$ term can be dropped.}
Hence, this function
is also $2$-Lipschitz using Hilbert-Schmidt norm as a metric.  Hence, (see for example theorem 5.17 of \cite{meckes2019random}), the probability that $|\delta|\geq x$ for any $x>0$ is bounded by $\exp(-\Omega(N x^2))$.  So, the terms of order $\delta^3$ are bounded by $O({\rm poly}(\log(N))/N^{3/2})$ with probability $1-1/N^{\alpha}$ for any constant $\alpha$ and so we may neglect them when estimating to order $1/N$.  We formalize this into the following `non-perturbative' result:
\begin{lemma} \label{lemma1}
For any $\alpha>1$, the following holds:
\begin{align} \label{betterseries}
\expec \left[\frac{\tr P \rho P \rho}{(\tr P \rho)^2}\right] &=\expec[4\tr P\rho P\rho -16(\tr P \rho P \rho) \delta +48(\tr P \rho P \rho) \delta^2] \\
  &+ \rm{max}\,\left( O({\rm poly}(\log(N))/N^{3/2}), O(N^{-\alpha})\right). \nonumber
\end{align}
\begin{proof}
We bound the expectation value of the ``error":
$$\Bigl| \frac{\tr P \rho P \rho}{(\tr P\rho)^2}-\Bigl(4\tr P\rho P\rho -16(\tr P \rho P \rho) \delta +48(\tr P \rho P \rho) \delta^2\Bigr)\Bigr|$$
by dividing into two cases. 

{\bf (1)} $\delta$ is sufficiently small, namely $O({\rm poly}(\log(N))/N^{1/2})$, which happens with probability  $\geq 1-1/N^{\alpha}$.  
In this case, the error is bounded by $\Bigl| \frac{\tr P \rho P \rho}{(\tr P\rho)^2}-(4\tr P\rho P\rho -16(\tr P \rho P \rho) \delta +48(\tr P \rho P \rho) \delta^2)\Bigr| \leq {\rm poly}(\log(N))/N^{3/2}$.

{\bf (2)} $\delta$ is not sufficiently small.  This happens with probability $1/N^{\alpha}$.  
However, for {\it any} choice of $P$ with $\tr P\rho \neq 0$, the error 
$\Bigl| \frac{\tr P \rho P \rho }{(\tr P \rho)^2}-(4\tr P\rho P\rho- 16(\tr P \rho P \rho) \delta +48(\tr P \rho P \rho) \delta^2)\Bigr|$
is bounded by a constant (using a triangle inequality, it is trivially bounded by $1+4+16+48$).
Hence the contribution to the expectation value of the error is bounded by $O(N^{-\alpha})$.

\end{proof}
\end{lemma}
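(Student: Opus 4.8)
The plan is to bound the expectation of the ``error''
\[
E \;:=\; \frac{\tr P \rho P \rho}{(\tr P \rho)^2} \;-\; \Bigl(4\tr P\rho P\rho - 16(\tr P \rho P \rho)\,\delta + 48(\tr P \rho P \rho)\,\delta^2\Bigr)
\]
by partitioning the space of Haar-random unitaries $U$ into a ``good'' event, on which $\delta$ is small, and its complement, the ``bad'' event. On the good event I will control $E$ pointwise by a Taylor remainder in $\delta$; on the bad event I will use a crude constant bound, which is affordable precisely because that event is exponentially unlikely.

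First I would invoke the concentration estimate established just above the lemma: since $U \mapsto \delta = \tr(U P U^\dagger \rho) - 1/2$ is $2$-Lipschitz in the Hilbert--Schmidt metric, $\Pr[\,|\delta| \ge x\,] \le \exp(-\Omega(N x^2))$. Taking the good event to be $\bigl\{\,|\delta| \le C\sqrt{(\log N)/N}\,\bigr\}$ with $C = C(\alpha)$ chosen large enough that $\exp(-\Omega(C^2 \log N)) \le N^{-\alpha}$, the bad event has probability at most $N^{-\alpha}$. Next, on the good event I expand $f(\delta) := (\tfrac12 + \delta)^{-2} = 4 - 16\delta + 48\delta^2 + O(\delta^3)$, with a uniform cubic remainder valid once $|\delta|$ is bounded away from $1/2$ (true for large $N$, since then $C\sqrt{(\log N)/N} < 1/4$). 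Multiplying by $\tr P\rho P\rho$ and using that $\rho' = P\rho P / \tr(P\rho)$ is a normalized density matrix, so $\tr P\rho P\rho = (\tr P\rho)^2 \tr({\rho'}^2) \le 1$, the pointwise error obeys $|E| = O(\delta^3) = O(\poly(\log N)/N^{3/2})$ on this event; hence its contribution to $\expec[|E|]$ is $O(\poly(\log N)/N^{3/2})$.

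Second, on the bad event I would observe that for every $U$ with $\tr P\rho \ne 0$ --- a full-measure condition --- each of the four terms defining $E$ is bounded by a constant: $\tr P\rho P\rho / (\tr P\rho)^2 = \tr({\rho'}^2) \le 1$, $\tr P\rho P\rho \le 1$, and $|\delta| \le 1/2$ because $0 \le \tr P\rho \le 1$; a triangle inequality then gives $|E| \le 1 + 4 + 16 + 48$. Multiplying this constant by the probability $N^{-\alpha}$ of the bad event gives a contribution $O(N^{-\alpha})$. Adding the two contributions, $\expec[|E|] = O(\poly(\log N)/N^{3/2}) + O(N^{-\alpha})$, and since $\expec[E]$ is exactly the difference of the two sides of \cref{betterseries}, this is the claimed identity up to the stated error.

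The routine parts are the Taylor estimate and the constant bound on the bad event; the one point that needs care is making the concentration bound quantitative enough to beat $N^{-\alpha}$ for the prescribed $\alpha$, i.e.\ tracking the implied constant in $\exp(-\Omega(N x^2))$ precisely enough to fix $C(\alpha)$. That constant follows directly from the stated $2$-Lipschitz property together with the standard measure-concentration theorem on $U(N)$ cited in the text, so no genuine obstacle remains.
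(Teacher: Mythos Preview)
Your proposal is correct and follows essentially the same route as the paper's proof: a good/bad event split on the size of $|\delta|$, concentration of measure on $U(N)$ to bound the bad event's probability by $N^{-\alpha}$, a cubic Taylor remainder on the good event, and the crude constant bound $1+4+16+48$ on the bad event. You have merely filled in a few details the paper leaves implicit (the explicit choice of threshold $C\sqrt{(\log N)/N}$, the observation $\tr P\rho P\rho \le 1$, and the justification that $|\delta|\le 1/2$).
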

In \cref{derivedsum} \cref{SD} we compute the first three terms in \cref{betterseries} to order $1/N$, using the Schwinger-Dyson equations to perform averages over $U$, to obtain:

\begin{align} \label{postsel}
\expec \left[\frac{\tr P \rho P \rho }{(\tr P \rho)^2}\right]= \tr \rho^2 +\frac{1}{N} \left(1-4 \tr \rho^3+3 (\tr \rho^2)^2\right)+\ldots
\end{align}
where the dots represent terms the asymptotically small error term in \cref{betterseries}.  When $\rho$ is close to a maximally mixed state, the traces of the higher powers of $\rho$ on the right hand side of \cref{postsel} can be neglected, and we see that the purity of $\rho$ increases by $1/N$.  This explains the initial linear growth of purity in \cref{fignum}.

Note that by monotonicity of Schatten $p$-norms, $(\tr \rho^2)^{1/2} \ge (\tr \rho^3)^{1/3}$, so that in the large $N$ limit
\begin{align}
\expec \left[\frac{\tr P \rho P \rho }{(\tr P \rho)^2}\right] \geq \tr \rho^2 +\frac{1}{N}(1-4(\tr \rho^2)^{3/2}+3(\tr \rho^2))^2). \nonumber
\end{align}
When $\rho$ is close to a pure state, {\emph{i.e.}} for $\tr \rho^2=1-\epsilon$ with $\epsilon$ small, this gives
\begin{align} \label{pconv}
\expec \left[\frac{\tr P \rho P \rho}{(\tr P \rho)^2}\right]  \geq 1- \left(\epsilon - \frac{3}{2N} \epsilon^2\right)+\cdots,
\end{align}
where now the $\cdots$ denote higher order terms in $\epsilon$ and $N^{-1}$.  Naively iterating this would give $\epsilon$ evolving in time $t$ (number of steps) as $N/t$, implying purification at a time scale scale $t \sim N^2$.  However, we cannot ignore non-linearities in the noise at those time scales, so we cannot draw any sharp conclusions about the late time behavior in the post-selected case.

In \cref{apnoise1} of \cref{SD} we also calculate the noise to leading order in $1/N$:

\begin{align}\label{noise1}
\expec \left[\frac{(\tr P\rho P\rho)^2}{(\tr P\rho)^4}\right]-\expec \left[\frac{(\tr P\rho P\rho)}{(\tr P \rho)^2}\right]^2 &= \frac{4}{N} \left( \tr \rho^4-2(\tr \rho^3)(\tr \rho^2) + (\tr \rho^2)^3\right) + \ldots
\end{align}
Again, the dots indicate an error term that is bounded to be asymptotically smaller than the leading $1/N$ term by \cref{lemma1}.  When $\tr \rho^2$ is small, $\tr \rho^3$ and $\tr \rho^4$ are upper bounded by $(\tr \rho^2)^{3/2}$ and $(\tr \rho^2)^2$ respectively, and lower bounded by $0$, which shows that, to leading order in $\tr \rho^2$, the noise is upper bounded by $4 (\tr \rho^2)^2 / N$.  This explains the lack of noise during the initial growth of the purity in \cref{fignum}.  Conversely, for a nearly pure state, $\tr \rho^2=1-\epsilon$, we show in \cref{noisebound} in \cref{SD} that the noise is upper bounded by $4 \epsilon^2/N$.  This is consistent with the lack of noise when the purity is close to $1$ in \cref{fignum}.

\subsection{Measurement case}

Now let us consider the case without post-selection.  We want to compute

\begin{align}
\expec \left[(\tr P\rho) \frac{\tr P \rho P \rho }{(\tr P\rho)^2}+(\tr (I-P)\rho) \frac{\tr (I-P) \rho (I-P) \rho}{(\tr (I-P)\rho)^2}\right],
\end{align}
 where $I$ is the identity matrix.  This is the desired value since the measurements outcomes occur with probability $\tr P \rho$ and $\tr (I-P)\rho$ respectively.  After cancelling the probability against one power of the trace in the denominator, and using the fact that the probability distribution for $P$ is invariant under $P\rightarrow I-P$, we want
 
 \begin{align}
\expec \left[2 \frac{\tr P\rho P\rho}{\tr P\rho}\right] &=\expec \left[4\tr P\rho P\rho-8(\tr P \rho P \rho) \delta +16(\tr P \rho P \rho) \delta^2 +\ldots \right]
\end{align}
where again the dots represent an asymptotically small error bounded in the same way as that in \cref{lemma1}.  As shown in \cref{derivedmeas} in \cref{SD}, the result is

\begin{align} \label{meas}
\expec \left[2 \frac{\tr P\rho P\rho}{\tr P\rho}\right] &=\tr \rho^2+\frac{1}{N}\left[1-2(\tr \rho^3)+(\tr \rho^2)^2\right] + \ldots
\end{align}
Once again we see a linear initial growth in steps of $1/N$, 
consistent with the red measurement curve in \cref{fignum}.
The second term is in fact larger than $\frac 1 N [ 1 - \tr \rho^2]$;
this is because 
$1 - 2 (\tr \rho^3) + (\tr \rho^2)^2 
\ge 
1 - 2(\tr \rho^2)^{3/2} + (\tr \rho^2)^2 
= 1 - (\tr \rho^2) + (\sqrt{\tr \rho^2} - \tr \rho^2)^2
$
by the monotonicity of the Schatten $p$-norms.
Thus, 
setting 
\[
F = 1 - \tr \rho^2,
\]
we see that the change $\Delta F$ of $F$ in the large $N$ limit obeys
\begin{align} \label{mconv}
\expec \left[\Delta F \right]\le - \frac{1}{N} F + \cdots .
\end{align}
This implies that for
\emph{any} initial probabilistic ensemble of $\rho$,
the ensemble average purity converges to~$1$ exponentially
with characteristic time $N$.
This has two immediate consequences.

First, the impurity $F$ is exponentially small with probability exponentially close to $1$.
Indeed, since $F \geq 0$, Markov's inequality implies that for any $a \in (0,1)$
\begin{align}
\Pr[ F \ge (\expec F)^{a} ] \le \frac{\expec F}{(\expec F)^{a}} = (\expec F)^{1-a}.
\end{align}
This implies in particular that the variance of the purity,
which is at most $\expec \left[ F^2 \right]$, decreases exponentially.
In fact, this ensemble variance can be computed more explicitly.
We average the square of the post-measurement purity over both measurement outcomes and the unitaries $U$.
We get the same answer as in the post-selected case (see \cref{apnoise2} in \cref{SD}):
\begin{align} \label{noise2}
\expec \left[\frac{(\tr P\rho P\rho)^2}{(\tr P\rho)^3}\right]-\expec \left[2\frac{\tr P\rho P\rho}{\tr P \rho}\right]^2 &= \frac{4}{N} \left( \tr \rho^4-2(\tr \rho^3)(\tr \rho^2) + (\tr \rho^2)^3\right) + \ldots
\end{align}
so we obtain the same bounds as in the post-selected case:
for small $\tr \rho^2$ the noise is bounded by~$4 (\tr \rho^2)^2 / N$, 
whereas for a nearly pure state, $\tr \rho^2=1-\epsilon$ with small $\epsilon$, 
the noise is bounded by $4 (\tr \rho^2)^2 / N$.
This is consistent with the lack of noise seen in the red measurement curve in \cref{fignum}
at early and late times.

Second, the von Neumann entropy $S_{vN}$ also 
is exponentially small with probability exponentially close to~$1$.
To see this, we observe that for any state $\rho$ of impurity $F = 1 - \tr \rho^2$
there exists a pure state $\sigma$ within trace distance $\tfrac 12 \| \rho - \sigma \|_1 \le F$.
Then, the continuity of von Neumann entropy (the Fannes-Audenaert inequality~\cite{Audenaert2006})
implies that for any $F \le \tfrac 1 2$
\begin{align}
S_{vN}(\rho) \le F \log(N-1) - F \log F - (1-F) \log (1-F) = O( F \log(1/F) \log N).
\end{align}

In \cref{SDrank2} we explicitly work out the case of rank $2$ density matrices $\rho$, where all the terms in the various equations can be expressed in terms of the purity $\tr \rho^2$.

\section{Connection to Dyson Brownian motion}
\label{sec:Dyson}

Brownian motion is a stochastic process of a point $x(t)$ in $\mathbb R^d$
where for each (small) time step $\rd t$ the point is displaced by $x(t + \rd t) - x(t) = \rd x$
that is drawn from the Gaussian distribution with mean zero and variance $\rd t$.
It is well known that in the limit $\rd t \to 0$,
the density of the points evolves according to the heat equation.
Dyson Brownian motion~\cite{Dyson} is a Brownian motion in $\RR^{d^2}$ 
that is identified with the space of all Hermitian matrices.
More concretely, the update rule for a Hermitian matrix $X(t)$ is that
the increment $X(t+\rd t) - X(t)$ is drawn from the Gaussian unitary ensemble
where each independent matrix element is normalized to have variance $\rd t$.
The probability density $\sigma$ of the eigenspectra $\{ \lambda_1(t), \ldots, \lambda_d(t)\}$ of $X(t)$ 
follows the Dyson partial differential equation:
\begin{align}
\partial_t \sigma &= \mathsf D^\dagger \sigma,\nonumber \\
\mathsf D^\dagger \sigma,
&= 
- \sum_{a \neq b} \partial_a \frac{\sigma }{\lambda_a - \lambda_b} 
+ \frac 1 2 \sum_{a} \partial_a^2 \sigma.
\end{align}
The purpose of this section is to derive an analogous differential equation
for the dynamics of probabilistic ensembles of states according 
to the ``measurement'' setting defined in the introduction.%
\footnote{
The ``post-selection'' setting can be similarly handled,
and is left to interested readers.
}

We will derive the following equation that holds for an
\emph{arbitrary} real-valued smooth function $F : \lambda \mapsto F(\lambda)$
of the eigenspectrum $\{ \lambda_1, \ldots, \lambda_d \}$ of a state of rank $\le d \ll N$:
\begin{align} 
\partial_t \expec F 
&= \expec \mathfrak D F,\label{eq:pdeF} \\
\mathfrak D F &= \sum_{a,b:~a \neq b}^d (\partial_a F) \frac{ \lambda_a \lambda_b }{\lambda_a - \lambda_b}
+
\frac 1 2 \sum_{a,b}^d (\partial_a \partial_b F)\lambda_a \lambda_b \left(\delta_{ab} - \lambda_a - \lambda_b + \tr(\rho^2)\right) \nonumber
\end{align}
where the expectation is over the distribution of $\rho$ at the given time.
If desired, one may turn this into an equation for the probability density of eigenspectra
by applying the adjoint differential operator~$\mathfrak D^\dagger$.
To have a differential equation 
we are of course taking some limit of our discrete measurement process.
Here, we are taking $N$, the dimension of the total Hilbert space, to infinity
while setting the ``infinitesimal'' time step~$\rd t$ to be equal to~$\frac 1 N$.
In other words, \cref{eq:pdeF} governs the dynamics of the ensemble average of~$F$ 
accurately to order~$\frac 1 N$
when the initial state~$\rho$ has rank at most~$d \ll N$.

Before we start the derivation,
we note that the equation implies that 
if $F$ is a function of the sum of all eigenvalues $\sum_a \lambda_a = 1$,
then $\partial_t \expec F = 0$,
as expected.
Indeed, if $F(\lambda) = f(\sum_a \lambda_a)$
for some smooth real function $f$,
then $\partial_a F = f'$ and $\partial_a \partial_b = f''$
and the first sum vanishes because the summand is antisymmetric under $a \leftrightarrow b$
and the second sum becomes $(\tr \rho^2) - 2(\tr \rho^2)(\tr \rho) + (\tr \rho^2)(\tr \rho)^2 = 0$.

Note also that the calculation of the previous section is reproduced.
If $F(\lambda) = \lambda_1^2 + \lambda_2^2 + \cdots + \lambda_d^2$, 
is the purity,
then 
\begin{align}
\mathfrak D F 
&= 
\sum_{a<b} \left(\frac{2\lambda_a^2 \lambda_b}{\lambda_a - \lambda_b} 
+ \frac{2\lambda_b^2 \lambda_a}{\lambda_b - \lambda_a} \right)
+
\frac 1 2 
\sum_{a,b} 2\delta_{ab} \lambda_a \lambda_b \left( \delta_{ab} - \lambda_a - \lambda_b + \tr(\rho^2)\right)
\nonumber \\
&=
2\sum_{a<b} \lambda_a \lambda_b
+
\sum_a \lambda_a^2( 1- 2 \lambda_a + \tr(\rho^2))\\
&= \big(\sum_a \lambda_a \big)^2 - \sum_a \lambda_a^2 + (\tr \rho^2) - 2 (\tr \rho^3) + (\tr \rho^2)^2
\nonumber
\end{align}
which agrees with \cref{meas}.
Observe that \cref{mconv} is reproduced by the first term (the level repulsion term)
of~$\mathfrak D F$ in~\cref{eq:pdeF}.

\subsection{Derivation}
Here we derive \cref{eq:pdeF} assuming $d$ is a constant independent of $N$.  Put 
\begin{align}
\rd t = \frac 1 N.
\end{align}
For any given unitary $U$ and a (fixed) projector $P$ of rank $N/2$,
we find it convenient to introduce $M$ defined by
\begin{align}
\frac{I + M}{2} = UPU^\dagger \label{eq:defM}
\end{align}
where the first and second moments are%
\footnote{
These are computed by
\begin{align*}
\expec_U \left[ U \ket a \bra b U^\dagger \otimes U \ket c \bra d U^\dagger \right]
&= 
\sum_{s = \pm 1} \frac{ I + s \Sw }{2N (N + s)}( \delta_{ab}\delta_{cd} + s \delta_{ad}\delta_{bc} )
\\
\expec \bra x U^\dagger P U \ket y \bra z U^\dagger P U \ket w
&=
\tr\left[ \expec \left( U \ket y \bra x U^\dagger \otimes U \ket w \bra z U^\dagger \right) (P \otimes P) \right]
=
\frac{(N^2 -2)\delta_{xy}\delta_{wz} + N \delta_{xw}\delta_{yz}}{4(N^2 -1)}
\end{align*} where $I$ is the identity and $\Sw$ is the operator that swaps the two tensor factors.
}
\begin{align}
\expec_U M_{ab} &= 0,\nonumber \\
\expec_U M_{ab} M_{cd} &= \delta_{ad} \delta_{bc} \rd t + \cO( \rd t^2 ) .\label{eq:cov}
\end{align}
Observe that these moments agree with those of Gaussian unitary ensemble, rescaled by $\rd t$,
up to the leading order.

After the measurement of $\rho$ by $\{ U^\dagger P U, I- U^\dagger P U\}$
we have two outcomes whose spectra coincide with those of
\begin{align}
\rho'_M 
&= \frac{\sqrt{\rho} U^\dagger P U \sqrt{\rho}}{\tr( \sqrt \rho U^\dagger P U \sqrt \rho )} 
= \frac{ \rho + \sqrt\rho M \sqrt \rho }{1 + tr M \rho}, \nonumber\\
\rho'_{-M}
&= \frac{\sqrt{\rho} U^\dagger(I-P) U \sqrt{\rho}}{\tr( \sqrt \rho U^\dagger(I-P) U \sqrt \rho )} 
= \frac{ \rho - \sqrt\rho M \sqrt \rho }{1 - tr M \rho}.
\end{align}
Given $M$, we obtain $\rho'_M$ with probability $\frac 1 2 ( 1 + \tr M\rho)$ 
and $\rho'_{-M}$ with probability $\frac 1 2 (1 - \tr M\rho)$.
Thus, the expectation of $F(\rho')$ for any function $F$ over post-measurement states $\rho'$
is given by
\begin{align}
\int F(\rho'_M) \frac 1 2 (1 + \tr M \rho) \rd M + \int F(\rho'_{-M}) \frac 1 2 ( 1 - \tr M \rho) \rd M. \label{eq:avg-post-update}
\end{align}
Since the Haar measure on $U$ is invariant under the left multiplication by $V$ where $V^\dagger P V = I - P$,
we see that the random matrix $M$ has the same distribution as $-M$,
implying that the two terms of \cref{eq:avg-post-update} are the same.
We conclude that the post-measurement ensemble given $\rho$ is
\begin{align}
\{ ( \rho'_M, \rd M' ) \} \quad \text{ where } \quad \rd M' = (1+\tr M \rho)\rd M
\label{eq:pm-ensemble}
\end{align}
where $M$ is determined from $U$ by \cref{eq:defM}, and $\rd M$ is the induced measure.
Hence, $\int M_{ab} M_{cd} \rd M = \delta_{ad} \delta_{bc} \rd t$ by \cref{eq:cov}.
Let $\rho'_M$ have eigenvalues~$\lambda'_1,\ldots, \lambda'_d$.
By second order perturbation theory,
\begin{align}
\left(1+ \tr M \rho \right) \lambda'_a 
&=
\lambda_a 
+ \lambda_a M_{aa} 
+ \sum_{1 \le b \le d:~ b \neq a} \frac{ \lambda_a \lambda_b |M_{ab}|^2 }{\lambda_a - \lambda_b} 
+ \cO(\frac 1 {N^{3/2}})
\end{align}
where $1+ \tr M \rho$ in front of $\lambda'_a$ is to normalize $\lambda'$ so that $\sum_a \lambda'_a = 1$.
So the change in the eigenvalue is
\begin{align}
\rd \lambda_a &= \lambda'_a - \lambda_a = \frac{1}{1+ \tr M \rho} 
\underbrace{
\left(
\lambda_a M_{aa}
+ \sum_{1 \le b \le d:~ b \neq a} \frac{ \lambda_a \lambda_b |M_{ab}|^2}{\lambda_a - \lambda_b} - \lambda_a \tr M \rho
\right)
}_{\xi_a}
+ \cO(\rd t)^{3/2}
\end{align}

For an arbitrary smooth function $F$ from spectra to $\RR$,
we calculate the expected value of $F$ with respect to 
the post-measurement ensemble in \cref{eq:pm-ensemble} as follows.
That is, $\expec F = \int F \rd M'$.
We will ultimately want to take the expectation over a general ensemble,
but for the moment we are assuming that the pre-measurement ensemble 
is a Dirac delta distribution.
Abbreviate $\frac{\partial}{\partial x_a}F(x_1,\ldots,x_d)$ as $\partial_a F$.
We only keep terms up to order~$\rd t$.
\begin{align}
\expec \rd F(\lambda)
&=
\sum_{a=1}^d (\partial_a F) \expec \rd \lambda_a + \frac 1 2 \sum_{a,b}^d (\partial_a \partial_b F)\expec \rd \lambda_a \rd \lambda_b \nonumber \\
&=
\sum_a (\partial_a F) \int \frac{ \xi_a \rd M' }{1+ \tr M \rho}
+ 
\frac 1 2 \sum_{a,b}^d (\partial_a \partial_b F)
\int  \frac{\xi_a \xi_b \rd M'}{(1+ \tr M \rho)^2} \label{eq:EdF}\\
&=
\sum_a (\partial_a F) \int \xi_a \rd M
+ 
\frac 1 2 \sum_{a,b}^d (\partial_a \partial_b F)
\int \frac{\xi_a \xi_b \rd M}{1+ \tr M \rho} \nonumber
\end{align}
In the second term of the last line 
we may ignore the denominator because $\xi_a \xi_b$ is already $\cO(\rd t)$.
Using~\cref{eq:cov}, we arrive at~\cref{eq:pdeF} after integrating over a pre-measurement distribution,
and letting $\rd t \to 0$.

\subsection{(Un)Importance of Haar randomness}

In the above derivation, we assumed $d \ll N$ is fixed and used the following:
(i) the first and the second moments of $M$ is given by \cref{eq:cov},
and (ii) $\lim_{\rd t \to 0} \expec \|M\|^3/\rd t = 0$.%
\footnote{
That we have used nondegenerate perturbation theory is not an issue;
though we are not going to rigorously prove it,
this is fine because the level repulsion term separates the eigenvalues under the dynamics.
}
The first condition is obviously used,
and the second is to estimate the error term by Taylor's theorem.
(Since $F$ is a smooth function over a compact domain, the derivatives are all bounded.)
This implies that our differential equation~\cref{eq:pdeF}
holds even if the random unitary $U$ 
were, for example, a unitary $4$-design.
Indeed, by definition, any unitary $4$-design $U$ gives
the same expectation value for every quartic polynomial function of $M=M(U)$ 
as if~$U$ were Haar random.
By concentration of measure for a Haar random unitary~$V$ as we discussed in the previous section,
we know that any fourth moment of $M(V)$ is $\tilde \cO(\rd t^2)$.
Observe that $\| M\|^4 \le (\tr M^\dagger M)^2$ and the latter is a quartic polynomial in~$M$.
Since $\expec \| M \|^3 \le ( \expec \| M \|^4 )^{\frac 3 4}$ by concavity of $x \mapsto x^{\frac 3 4}$,
we see that $\expec_U \| M(U) \|^3 \le \tilde \cO(\rd t)^{3/2}$ for any unitary $4$-design $U$.

\section{Free Fermion Dynamics}
\label{secFFD}

We now specialize to the case of free fermion dynamics.  Consider the $2^n$ dimensional Fock space of $n$ fermionic modes, acted on by creation and annihilation operators $a_\mu^\dagger$ and $a_\mu$, $\mu=1,\ldots,n$.  In this section we will study {\emph{non-interacting}} measurement-unitary dynamics.  This is dynamics where the observables being measured are quadratic in the $a_\mu, a_\mu^\dagger$, and the unitaries being applied are exponentials of anti-Hermitian operators quadratic in the $a_\mu, a_\mu^\dagger$.  We will allow for non-particle conserving processes, and will find it useful to work with the Majorana operators $\gamma_{2\mu-1}=a_\mu+a_\mu^\dagger, \gamma_{2\mu}= i (a_\mu - a_\mu^\dagger)$, $\mu=1,\ldots,n$.


Our main result in this section is that, starting from the maximally mixed state, or, more generally, any mixed Gaussian state, a free fermion system purifies after $\sim n^2$ random free fermion measurements.  This is slower than e.g. a system in an area-law entanglement phase, which purifies in time $\sim n$, but much faster than a general interacting system, whose purification time scales like the many body Hilbert space dimension ($2^n$ in this case), as we shall see in the next section.  In the framework of Gullans and Huse \cite{Gullans}, a free fermion hybrid measurement-unitary circuit is thus always in the purifying phase, consistent with the intuition that a quantum error correcting code cannot dynamically emerge from free fermion dynamics.  We emphasize that we make no locality assumptions: the free fermion unitaries are fully random and thus non-local.

Since our evolution takes place entirely within the space of Gaussian states, let us recall some basic facts about these states, following Ref. \cite{Bravyi}.  A Gaussian state $\rho$ is a density matrix that, when written as a polynomial in the Majorana operators $\gamma_j$ with each $\gamma_j$ appearing with exponent $0$ or $1$ in each term, can be put into the form

\begin{align} 
    \frac{1}{2^n} \exp\left(\frac{i}{2}\theta^T M \theta \right) \nonumber
\end{align}
when the $\gamma_j$ are replaced with Grassmann numbers $\theta_j$.
Here $M$ is a real anti-symmetric $2n$-by-$2n$ matrix known as the correlation matrix.
We have
\begin{align}
    M_{ij}=\frac{i}{2} \Tr\, \rho \, [\gamma_i,\gamma_j]. \nonumber
\end{align}
Any such $M$ can be transformed by an $SO(2n)$ rotation $R$ into the following block-diagonal form:
\begin{align}
    M= R \, \bigoplus_{\mu=1}^n \begin{pmatrix}
      0 & \lambda_\mu \\
      -\lambda_\mu & 0
    \end{pmatrix} R^T \nonumber
\end{align}
where the $\lambda_\mu$ satisfy $-1 \leq \lambda_\mu \leq 1$ and are known as the Williamson eigenvalues of $M$.  Implementing the rotation $R$ on Fock space, the state $\rho$ is transformed into

\begin{align}
\rho_0 = \frac{1}{2^{n}} \prod_{\mu=1}^n \left(1+i \lambda_\mu \gamma_{2\mu-1}\gamma_{2\mu} \right). \nonumber
\end{align}
Since $i \gamma_{2\mu-1} \gamma_{2\mu}= a_{\mu} a_{\mu}^\dagger - a_{\mu}^\dagger a_{\mu}$ is the operator that measures the fermion parity of mode $\mu$, we see that $\rho_0$ is a tensor product state where each mode $\mu$ is independently filled or empty with probabilities $\frac{1}{2} (1-\lambda_\mu)$ and $\frac{1}{2}(1+\lambda_\mu)$ respectively.  The correlation matrix of $\rho_0$ is

\begin{align}
    M_0= \bigoplus_{\mu=1}^n \begin{pmatrix}
      0 & \lambda_\mu \\
      -\lambda_\mu & 0
    \end{pmatrix}. \nonumber
\end{align}

Before proceeding, it will also be useful to define the following function on Gaussian states:
\begin{align}
    S_{\mathrm{proxy}} (\rho)&=(\log 2)(n + \frac{1}{2} \Tr\,M^2) \nonumber \\
    &= (\log 2)(n - \Tr\,\mM^2)
\end{align}
where $\mM_{\mu \nu} = 2\, \Tr \, (\rho \, a_\mu a_\nu^\dagger) - \delta_{\mu\nu}$ is discussed below.
$S_{\mathrm{proxy}} (\rho)$ is a proxy for the second Renyi entanglement entropy 
$S_2(\rho)=n \log 2 - \frac{1}{2} \,\Tr \, \log (1-M^2)$ 
because it agrees with it on the maximally mixed state (where $M=0$) 
and all Gaussian pure states (where $M^2 = - 1$),
and in all other cases stays within order~$1$ constant multiples of $S_2$.
In the rest of this section we will establish bounds on the rate at which $S_{\mathrm{proxy}}$ decreases;
these will immediately translate to bounds on the second Renyi entropy.

\subsection{Particle number conserving dynamics} \label{conserv}

If we start with $\rho_0$ and apply dynamics that conserves $U(1)$ particle number, then the system moves through a restricted set of Gaussian states which are slightly easier to work with, and whose form we now derive.  A $U(1)$ particle number conserving unitary acts on the operator algebra by:
\begin{align}
    a_\mu &\rightarrow \mU_{\mu \nu}^* \, a_\nu \nonumber \\
    a_\mu^\dagger &\rightarrow \mU_{\mu \nu} \, a_\nu^\dagger, \nonumber
\end{align}
where $\mU$ is a unitary $n$-by-$n$ matrix.  Written in terms of Majoranas this is:

\begin{align} \label{Udef}
    \gamma_i \rightarrow O_{ij}\,\gamma_j 
\end{align}
where the $2n$-by-$2n$ orthogonal matrix $O$ describes the action of the unitary $\mU$ on ${\mathbb C}^{n}$ viewed as a real vector space ${\mathbb R}^{2n}$.  Explicitly, the $2$-by-$2$ block spanning rows $2\mu-1$ and $2\mu$ and columns $2\nu-1$ and $2\nu$ of $O$ ($\mu,\nu=1,\ldots, n$) is 
\begin{align} \label{specialO}
  \begin{pmatrix}
  \mathrm{Re} \, \mU_{\mu\nu} & -\rm{Im} \, \mU_{\mu\nu} \\
  \rm{Im} \, \mU_{\mu\nu} & \rm{Re} \, \mU_{\mu\nu} 
  \end{pmatrix}
\end{align}
Now consider a Gaussian state $\rho={\hat{U}}^\dagger \rho_0 {\hat{U}}$, where ${\hat{U}}$ is the Fock space operator that implements the action of $\mU$ in \cref{Udef}.  We have

\begin{align}
    \frac{i}{2} {\Tr}\,\rho \,[\gamma_j,\gamma_k] &= 
    \frac{i}{2} {\Tr}\,\rho_0\, {\hat{U}} [\gamma_j,\gamma_k] {\hat{U}}^\dagger \nonumber \\
    &= O_{jj'} O_{kk'} (M_0)_{j'k'} \nonumber \\
    &= \left(O M_0 O^T \right)_{jk}, \nonumber
\end{align}
so the correlation matrix of $\rho$ is $M=O M_0 O^T$.  
Again identifying ${\mathbb{R}}^{2n}$ with ${\mathbb C}^{n}$, 
we see that $M$
is the underlying real space action of the anti-Hermitian operator \mbox{$-i {\mathcal{M}}=-i \mU \mM_0 \mU^{\dagger}$},
where $\mM_0$ is the diagonal matrix with $\lambda_\mu$ on the diagonal.
We have \mbox{$\mM_{\mu \nu} = 2\, \Tr \, (\rho \, a_\mu a_\nu^\dagger) - \delta_{\mu\nu}$}.
This class of Gaussian states is precisely the class 
in which pairing correlations $\Tr \, (\rho \, a_\mu a_\nu)$ all vanish.

Let us now consider $\Delta S_{\mathrm{proxy}}$, 
the change in $S_{\mathrm{proxy}} (\rho)=(\log 2)(n - \Tr\,\mM^2)$, 
averaged over measurement outcomes, 
after measuring the occupation number of the first mode, 
i.e., the observable $i \gamma_1 \gamma_2$.  
In \cref{apfree} we show that 
\begin{align} \label{freeconserving}
\Delta S_{\mathrm{proxy}} &= - \frac{\log 2}{1-(\mM_{11})^2} (1-(\mM^2)_{11})^2 \leq 0
\end{align}
where $\mM_{11}$ is the entry in the first row and first column of $\mM$.
We thus see that, averaged over measurement outcomes the proxy entropy cannot increase, 
consistent with the same facts about the von Neumann and second Renyi entropies 
as proved in \cref{entineq}.

Now imagine that we have an ensemble of density matrices,
one that results for example from the application of several steps of a hybrid unitary-measurement circuit.
Letting the bar denote the average over this ensemble, we have:
\begin{align} \label{freefermionent}
\left|\overline{\Delta S_{\mathrm{proxy}}}\right| 
\geq (\log 2)\overline{(1-(\mM^2)_{11})^2} 
\geq (\log 2)\left(\overline{1-(\mM^2)_{11}}\right)^2
\end{align}
Now suppose that our circuit consists of Haar random free fermion unitaries 
interspersed with measurements of the first mode.
In this case, 
$\overline{1-(\mM^2)_{11}} = 1- \frac{1}{n} \overline{\Tr\, \mM^2} = \overline{S_{\mathrm{proxy}}} / (n \log 2)$.
Letting $s=S_{\mathrm{proxy}} / (n \log 2)$ be the density of the proxy entropy, we then have
\begin{align}
\left|\overline{\Delta s}\right| \geq \frac 1 n \overline{s}^2.
\end{align}

This equation roughly means that when the average proxy entropy density is $\overline{s}$, 
we learn at least $\sim {\overline{s}}^2$ about the system by measuring a single mode.  
It implies that $\overline{s(t)} \leq (1+t/n)^{-1}$, 
where $t$ measures the number of unitary-measurement steps taken.  
Note that due to the convexity properties above, 
this is a rigorous upper bound on $\overline{s(t)}$.  
We thus see that the system loses half of its entropy density in a time~$\sim n$, 
and its purity becomes of order~$1$ in a time~$\sim n^2$.
We expect these results to hold for a much more general class of free fermion unitaries, 
since the Haar random case intuitively corresponds to the situation where mixing is maximal.
Indeed, one generalization is a protocol where one alternates the application of an arbitrary free fermion unitary (not necessarily Haar random, and possibly different at each step) and measurement of each site with some nonzero probability.  In this case, an argument similar to the above shows that the proxy entropy decreases as $\sim 1/t$, implying that a mixed phase cannot be sustained in such a system.  As a consequence, an entanglement volume law phase cannot be sustained in such a system either.

\subsection{Particle number non-conserving dynamics}

Now let us perform the same calculation for a general Gaussian state $\rho$, with $2n$-by-$2n$ real anti-symmetric correlation matrix $M$ which may now contain non-zero pairing correlations.  To find the correlation matrix of the post-measurement state we use the techniques of Sec. VIII of Ref.~\cite{Bravyi}.  
Let~$K$ be the $2n$-by-$2n$ matrix $K_{pq} = \delta_{p1} \delta_{q2} - \delta_{p2} \delta_{q1}$.  Let $\alpha=M_{12}$, let $Q_{pq} = \delta_{p1} \delta_{q1} + \delta_{p2} \delta_{q2}$ be the projector on the first two basis vectors, and let $P=1-Q$.  Note that $QMQ=\alpha K$.  The probabilities of the first mode being empty and filled are $p_{\pm} = (1 \pm \alpha) /2$, and the post-measurement correlation matrix is

\begin{align}
M'_{\pm}=\pm K + P\left( M \pm \frac{1}{1\pm \alpha} MKM \right) P. \nonumber
\end{align}
We thus have

\begin{align}
p_+ \Tr\, (M'_+)^2 + p_- \Tr\, (M'_-)^2 = -2 + \Tr\, (PMP)^2 + \frac{1}{1-\alpha^2} \Tr\,(PMKMP)^2 \nonumber
\end{align}
On the other hand,

\begin{align}
\Tr\,M^2= \Tr\,(P+Q)M(P+Q)M = \Tr\, (PMP)^2 + 2 \Tr\, PMQMP - 2 \alpha^2 \nonumber
\end{align}
Thus

\begin{align} \label{DeltaSeq}
\Delta S_{\mathrm{proxy}} &= \frac{\log 2}{2} \left( p_+ \Tr\, (M'_+)^2 + p_- \Tr\, (M'_-)^2 - \Tr\, M^2 \right) \nonumber \\
&= -(1-\alpha^2) + \Tr\, KMPMK-\frac{1}{2(1-\alpha^2)} \Tr\, \left((KMPMK)K(KMPMK)K\right)
\end{align}
We note that the only non-zero entries of $KMPMK$ are in the upper left $2$ by $2$ block; these are

\begin{align}
\begin{pmatrix}
A & -C \\
-C & B 
\end{pmatrix}\nonumber
\end{align}
with $A=\sum_{j=3}^{2n} m_{2j}^2$, $B=\sum_{j=3}^{2n} m_{1j}^2$, and $C=\sum_{j=3}^{2n} m_{1j} m_{2j}$.  Let $x=(0, m_{12},m_{13}, \ldots, m_{1(2n)})$ and $y=(m_{21},0,m_{23},\ldots,m_{2(2n)})$ be the first two rows of $M$.  Note that $|x|^2=\alpha^2+B$ and $|y|^2 = \alpha^2+A$.  Using these facts, the expression in \cref{DeltaSeq} simplifies after some algebra to

\begin{align}
\Delta S_{\mathrm{proxy}} &= -\frac{\log 2}{1-\alpha^2}\left[ (1-|x|^2)(1-|y|^2) - (x \cdot y)^2 \right] \nonumber \\
&= -\frac{\log 2}{1-\alpha^2}\, \det\, Q (1+M^2) Q \nonumber
\end{align}
where we take the determinant of only the upper left $2$-by-$2$ block of $Q(1+M^2)Q$.
The matrix $1+M^2$ is positive semidefinite since its eigenvalues all lie between $0$ and $1$, 
so the same is true of $Q(1+M^2)Q$, 
and hence the determinant above is always non-negative.
This shows that the proxy entropy, averaged over measurement outcomes, always decreases.

Let us now start from the maximally mixed state and 
alternately apply $SO(2n)$ Haar-random unitaries and measurement operations.
We claim that this hybrid unitary-measurement circuit purifies in time $\sim n^2$.
To see this, we have to average the above formula for $\Delta S_{\mathrm{proxy}}$ 
over Gaussian states with with correlation matrices of the form $O M O^T$.
This requires computing averages of various quartic expressions in the entries of $O$, 
which we do using the Weingarten calculus.%
\footnote{
The average of any quartic polynomial in Haar random $O \in SO(2n)$ can be computed as follows.
The average $E = \expec_O O \ket a \bra b O^T \otimes O \ket c \bra d O^T$
commutes with $R \otimes R$ for any $R \in SO(2n)$.
Therefore~\cite{Collins2004}, we must have $E = xI + yS + z W$ 
for some $x,y,z \in \mathbb R$
where
$I = \sum_{j,k=1}^{2n} \ket{j,k}\bra{j,k}$,
$S = \sum_{j,k=1}^{2n} \ket{k,j}\bra{j,k}$,
and
$W = \sum_{j,k=1}^{2n} \ket{j,j}\bra{k,k}$.
Observe that $\Tr I = 4n^2$, $\Tr S = 2n$, and $\Tr W = 2n$.
Now, $\Tr E$, $\Tr ES$, and $\Tr EW$ can be computed directly,
determining the coefficients $x,y,z$.
}
We use the fact that for $j\neq k$ 
we have 
$\overline{O_{j1} O_{j1} O_{k2} O_{k2}}=1/(4n^2) + \mathcal O(n^{-4})$,
$\overline{O_{j1} O_{j2} O_{k1} O_{k2}}=-1/(8n^3) + \mathcal O(n^{-4})$,
$\overline{O_{j1} O_{j1} O_{j2} O_{j2}}=1/(4n^2) - 1/(4n^3) + \mathcal O(n^{-4})$.
The average of the determinant in the above equation can then be expressed in terms of the spectrum of $1+M^2$; in fact, all that enters is the sum of the eigenvalues, $\Tr\,(1+M^2)$, and the sum of the squares of the eigenvalues, $\Tr\, (1+M^2)^2$.  We have, to leading order in $1/n$,

\begin{align} \label{pairingbound}
|\overline{\Delta S_{\mathrm{proxy}}}| &=  \frac{\log 2}{4n^2} \left[ \left(\Tr\,(1+M^2)\right)^2 - \Tr\, (1+M^2)^2 \right] \nonumber \\
 &\geq  (\log 2) \left[ \left( \frac{\overline{S_{\mathrm{proxy}}}}{n \log 2} \right)^2 - \frac{1}{2n}\left( \frac{\overline{S_{\mathrm{proxy}}}}{n \log 2} \right)\right]
\end{align}
We thus again see that, as long as the proxy entropy is much greater than $1$, 
the proxy entropy density decreases at a rate at least as fast as the square of this density, 
leading to similar bounds as in the particle number conserving case. 
In particular, starting from the maximally mixed state, 
half of the proxy entropy is lost in time $\sim n$, 
and the purity becomes of order~$1$ in time~$\sim n^2$.
However, from the above bound we cannot determine that 
the state will eventually purify (i.e., that the purity will tend to~$1$),
only that it will reach purity of order~$1$.
Also, the bound in \cref{pairingbound} may not apply to protocols other than the Haar random case.
Indeed, in Ref.~\cite{Skinner} it is shown that a $1$-dimensional fermionic system 
in which random $i\gamma_j \gamma_{j+1}$ operators are measured (for both parities of $j$) 
lies in the bond percolation universality class, 
and there we expect that after $t$ measurements 
--- which corresponds to $1+1$d time $\sim n$ --- 
the entropy should be of order $n/t$.  Furthermore, it is relatively easy to construct a (still free fermion) generalization of this model,
whose associated statistical mechanical model is the loop model with crossings of Ref.~\cite{Nahum}.  At any point corresponding to
the `Goldstone phase' of the associated loop model, the entropy - which corresponds to the spanning number - for a fixed ratio of
$n/t$ scales as $\log n$.

\section{Discussion}
\label{discuss}
We have analyzed the many-body and free-fermion case.
If the Hilbert space dimension~$N$ in the many-body case equals $2^n$ for some system of~$n$ qubits,
then in the many-body case, the relaxation time is exponentially long in the number of degrees of freedom.
This contrasts strongly with the free-fermion case where the relaxation time is 
only polynomially long in the number of degrees of freedom 
--- more precisely, it is of order $n^2$.
It is an interesting question whether some intermediate behavior is possible.

Let us mention one further toy model.
Consider a system of $n$ qubits, with Pauli measurements and Clifford unitaries applied.  If we apply a sequence of measurements $Z_1,Z_2,Z_3,\ldots$, with no intervening unitaries, then the system purifies in linear time.  On the other hand, if we apply random Cliffords in between measurements of a fixed Pauli, it is easy to see that the purification time is exponential in $n$.  
In this process, the many-body state can be described 
as a stabilizer state with $\leq n$ linearly independent stabilizers.
Initially, the maximally mixed state has no stabilizers.
Given a state with $k$ linearly independent stabilizers, 
a further measurement of a product of Paulis will add a new stabilizer to the state 
(hence decreasing the entropy by $1$ bit) 
if the new stabilizer commutes with all the previous stabilizers and is linearly independent of them.
Without loss of generality we can fix the $k$ given stabilizers to be $Z_1,Z_2,\ldots,Z_k$.
If we measure a random product of Paulis, 
the probability that it commutes with the previous stabilizers is exponentially small in $k$.%
\footnote{
If we measure a product of Paulis generated 
by taking a fixed Pauli and conjugating it by a Clifford chosen uniformly from all Cliffords, 
the result is a random product of Paulis chosen uniformly from all such products 
subject to the condition that it is not equal to the identity.
For a quick analysis, 
it is simplest to instead consider the case that the random product is chosen uniformly from all Paulis, 
including the identity, in which case nothing is measured.
Then, the probability that the measurement commutes with existing stabilizers is exactly $2^{-k}$ 
and the probability that it is linearly independent given that it commutes is $1-4^{-(n-k)}$ 
since this is the probability that it is not equal to the identity on the remaining $n-k$ qubits.
}
Thus, the purification proceed monotonically: 
each stabilizer measurement can either reduce the entropy by $1$ bit or leave it unchanged.
However, the purification time indeed is exponential in $n$.

Our results and this toy model suggest that 
there might be a dichotomy between polynomial and exponential relaxation times.  
However, perhaps richer relaxation behavior may be observed at a phase transition between these two possibilities.

\acknowledgments
LF acknowledges the support of NSF DMR 1939864, and useful conversations with Matthew P. A. Fisher.

\appendix

\section{Appendix: Entropy Inequalities}
\label{entineq}
Here we prove that entropy cannot increase on average after a measurement 
and square-root purity cannot decrease on average after a measurement.

\begin{lemma}
Given {\it any} density matrix $\rho$, and any measurement (either a projective measurement or more generally a POVM), the average, over measurement outcomes, of the von Neumann entropy of the state after measurement, cannot increase.
\begin{proof}
Note first that since a POVM can be implemented by a projective measurement in a larger Hilbert space, it suffices to consider the case of projective measurements.
Consider a tripartite system, with three subsystems $A,B,C$.  Choose subsystem $A$ have have reduced density matrix $\rho_A=\rho$ and choose $C$ to be an arbitrary purification of $\rho$, i.e., the density matrix $\rho_{AC}$ on $AC$ is a pure state.  We use $B$ as a register for the measurement: initially, $B$ is in some fixed state $|0\rangle$, and then some unitary on $AB$ is applied so that the state of $B$ in some given basis records the outcome of the measurement.  This unitary is chosen in the obvious way so that if $\rho$ is in the range of any of the projectors defining the projective measurement, then the reduced density matrix on $A$ is left unchanged.
We write $\tau$ for the density matrix of the tripartite system after this unitary is applied.

Then, the entropy of $A$ after measurement, averaged over measurement outcomes, is equal to
$S(\tau_A)-S(\tau_B)$.  Since $C$ still purifies $AB$, this is equal to $S(\tau_{BC})-S(\tau_B)$.  By subadditivity, this is $\leq S(\tau_C)=S(\rho)$.
\end{proof}
\end{lemma}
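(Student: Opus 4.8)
The plan is to reduce the statement to subadditivity of the von Neumann entropy, by pairing $\rho$ with a purifying reference and with an auxiliary ``pointer'' register that stores the measurement outcome.

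First I would dispose of the POVM case: any POVM is realized by adjoining an ancilla in a fixed state, applying a unitary, and performing a projective measurement on the enlarged space, so it suffices to treat a projective measurement $\{\Pi_i\}$ with $\sum_i \Pi_i = I$. Let $A$ carry $\rho$, let $C$ be a reference with $|\psi\rangle_{AC}$ a purification of $\rho$, and let $B$ be a pointer register in a fixed state $|0\rangle$. Define $V$ on $AB$ by $V\big(|\phi\rangle_A \otimes |0\rangle_B\big) = \sum_i \big(\Pi_i |\phi\rangle_A\big)\otimes |i\rangle_B$; since $\sum_i \Pi_i = I$ this is norm preserving on the image of $|0\rangle_B$, hence extends to a unitary on $AB$. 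Set $|\tau\rangle_{ABC} = (V\otimes I_C)\big(|\psi\rangle_{AC}\otimes|0\rangle_B\big)$, a pure state.

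Second I would identify the quantity to be bounded. With $p_i = \tr(\Pi_i\rho)$, the post-measurement state given outcome $i$ is $\rho^{(i)} = \Pi_i\rho\Pi_i/p_i$, so the outcome-averaged post-measurement entropy is $\sum_i p_i S(\rho^{(i)})$. A short computation gives $\tau_A = \sum_i \Pi_i\rho\Pi_i$, which is block diagonal with respect to the mutually orthogonal subspaces $\mathrm{ran}\,\Pi_i$, the $i$-th block having trace $p_i$; therefore $S(\tau_A) = H(\{p_i\}) + \sum_i p_i S(\rho^{(i)})$. Likewise $\tau_B = \sum_i p_i|i\rangle\langle i|$, so $S(\tau_B) = H(\{p_i\})$. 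Hence the outcome-averaged post-measurement entropy equals exactly $S(\tau_A) - S(\tau_B)$.

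Third I would close the argument: since $|\tau\rangle_{ABC}$ is pure, $S(\tau_A) = S(\tau_{BC})$; subadditivity gives $S(\tau_{BC}) \le S(\tau_B) + S(\tau_C)$; and since $V$ acts only on $AB$, $\tau_C = \rho_C$, which has $S(\tau_C) = S(\rho)$ because $|\psi\rangle_{AC}$ is pure. Chaining these, $\sum_i p_i S(\rho^{(i)}) = S(\tau_A) - S(\tau_B) = S(\tau_{BC}) - S(\tau_B) \le S(\tau_C) = S(\rho)$. The step that carries the content is this last chain, and the reference $C$ is essential there: one cannot bound the averaged post-measurement entropy by $S(\tau_A) = S\big(\sum_i \Pi_i\rho\Pi_i\big)$ directly, because pinching only increases entropy and so $S(\tau_A)\ge S(\rho)$ points the wrong way; it is the purity of $ABC$ together with subadditivity that flips the bound into the useful direction. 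The remaining points are routine: checking that $V$ is an isometry (immediate from $\sum_i\Pi_i = I$) and that it leaves $\rho_C$ unchanged.
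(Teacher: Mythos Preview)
Your proof is correct and follows essentially the same architecture as the paper's: introduce a purifying reference $C$ and a pointer register $B$, identify the outcome-averaged post-measurement entropy with $S(\tau_A)-S(\tau_B)$, then use purity of $ABC$ and subadditivity to bound this by $S(\tau_C)=S(\rho)$. You supply more explicit detail than the paper does---the construction of the isometry $V$, the block-diagonal computation giving $S(\tau_A)=H(\{p_i\})+\sum_i p_i S(\rho^{(i)})$, and the verification that $S(\tau_B)=H(\{p_i\})$---but the underlying argument is the same.
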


We now prove a similar result for the square-root of the purity of a quantum state.  
Then,
\begin{lemma}
Given {\it any} density matrix $\rho$, and any measurement (either a projective measurement or more generally a POVM), the average, over measurement outcomes, of the square-root purity of the state after measurement, cannot decrease.
\end{lemma}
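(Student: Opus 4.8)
The plan is to identify the square-root purity of a density matrix $\tau$ with its Hilbert--Schmidt norm $\|\tau\|_2 = (\tr \tau^2)^{1/2}$ and to reduce to the case of a projective measurement $\{P_i\}$ with $\sum_i P_i = I$. Exactly as in the proof of the previous lemma, a general POVM is realized by a projective measurement on a larger Hilbert space; here the relevant check is that this dilation changes neither the outcome probabilities nor the square-root purity of the post-measurement states, which holds because the post-measurement state on the enlarged space is the reduced post-measurement state tensored with a pure ``pointer'' state, and a pure factor does not change the purity. For a projective measurement, outcome $i$ occurs with probability $\tr(P_i\rho)$ and leaves the normalized state $P_i\rho P_i/\tr(P_i\rho)$, of square-root purity $\|P_i\rho P_i\|_2/\tr(P_i\rho)$, so the assertion of the lemma is precisely
\[
\sum_i \|P_i\rho P_i\|_2 \ \geq\ \|\rho\|_2 = (\tr\rho^2)^{1/2}.
\]

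To prove this I would first use $P_iP_j=\delta_{ij}P_i$ to write $\rho = \sum_{i,j}P_i\rho P_j$ as an orthogonal (with respect to the Hilbert--Schmidt inner product) sum of blocks, giving $\tr\rho^2 = \sum_{i,j}\|P_i\rho P_j\|_2^2$. The heart of the argument is then to control the off-diagonal blocks by the diagonal ones, and this is where positivity of $\rho$ is used: setting $\sigma=\rho^{1/2}\succeq 0$, the Hilbert--Schmidt Cauchy--Schwarz inequality applied to the Hermitian matrices $\sigma P_i\sigma$ and $\sigma P_j\sigma$ gives
\[
\|P_i\rho P_j\|_2^2 = \tr(\rho P_i\rho P_j) = \tr\bigl((\sigma P_i\sigma)(\sigma P_j\sigma)\bigr) \leq \|\sigma P_i\sigma\|_2\,\|\sigma P_j\sigma\|_2,
\]
and one checks directly that $\|\sigma P_i\sigma\|_2^2 = \tr(\rho P_i\rho P_i) = \|P_i\rho P_i\|_2^2$. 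Summing over all $i,j$ yields $\tr\rho^2 \leq \bigl(\sum_i \|P_i\rho P_i\|_2\bigr)^2$, and taking square roots gives the desired inequality.

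All of the algebra here is routine; the single substantive point is the passage through $\sigma=\rho^{1/2}$, which turns the norm of the off-diagonal block $P_i\rho P_j$ into a Hilbert--Schmidt inner product of two positive semidefinite matrices, at which stage Cauchy--Schwarz bounds it by the diagonal norms. This step fails for a general Hermitian (non-positive) $\rho$, which is why the statement is restricted to density matrices, and I expect it to be the only part of the writeup needing genuine care, together with the brief verification that the POVM-to-projective reduction preserves the square-root purity.
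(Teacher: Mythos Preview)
Your proof is correct, and it is a bit different from the paper's argument, though both ultimately establish the same key block inequality $\|P_i\rho P_j\|_2^2 \le \|P_i\rho P_i\|_2\,\|P_j\rho P_j\|_2$. The paper first reduces to a two-outcome measurement (by iterated coarsening), writes $\rho$ in $2\times 2$ block form $\begin{psmallmatrix}A&B\\B^\dagger&D\end{psmallmatrix}$, diagonalizes $B$ by its SVD, and then combines positivity of $2\times 2$ principal minors of $\rho$ (so $A_{jj}D_{jj}\ge |B_{jj}|^2$) with Cauchy--Schwarz on the diagonal entries to get $\tr(B^\dagger B)\le \sqrt{\tr A^2}\sqrt{\tr D^2}$. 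Your route via $\sigma=\rho^{1/2}$ is more streamlined: it treats all outcomes at once without the two-outcome reduction, avoids choosing a basis, and replaces the minor argument by a single Hilbert--Schmidt Cauchy--Schwarz applied to the positive operators $\sigma P_i\sigma$. The paper's version is slightly more elementary in that it uses only SVD and $2\times 2$ determinants rather than operator square roots; yours is shorter and makes the role of positivity (existence of $\rho^{1/2}$) completely transparent. Your handling of the POVM reduction is also a touch more explicit than the paper's, which simply asserts the reduction without spelling out that the pure ancilla factor preserves the purity.
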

\begin{proof}
As before, consider only projective measurements.  
Label measurement outcomes by an index~$i$ and regard~$\rho$ as a block matrix 
with row and column blocks labeled by this index $i$.
Let $\rho_{ij}$ denote the submatrix in the $i$-th row and $j$-th column block.
Each normalized block $\sigma_i = \rho_{ii} / \tr \rho_{ii}$ is a post-measurement state,
which comes with probability $p_i = \tr \rho_{ii}$.
Hence, our goal is to show
\[
\sqrt{\tr(\rho^2)} \le \sum_i \sqrt{\tr( \rho_{ii} ^2 )} = \sum_i p_i \sqrt{\tr(\sigma_i^2)} .
\]
This inequality follows from the case where the index $i$ assumes only two values
since one can consider coarser blocking of~$\rho$ and subdivide it.
So, assume that $\rho$ consists of $A = \rho_{11}, B = \rho_{12} = \rho_{21}^\dag$, and $D = \rho_{22}$.
Squaring the inequality, 
we see that our inequality is equivalent to
$\tr(B^\dag B) \le \sqrt{\tr(A^2)}\sqrt{\tr(D^2)}$.
Let us work in the basis where $B$ is diagonal 
by taking singular value decomposition of~$B$.
(In this basis, $A$ and $D$ are not necessarily diagonal.)
Since $\rho$ is positive semidefinite,
the determinant of any principal $2$-by-$2$ block must be nonnegative.
In particular, $A_{jj}D_{jj} - (B_{jj})^2 \ge 0$.
Summing over $j$, 
we have $\tr(B^\dag B) = \sum_j (B_{jj})^2 \le \sum_j A_{jj} D_{jj}$.
By Cauchy--Schwarz, $\sum_j A_{jj} D_{jj} \le \sqrt{\sum_j (A_{jj})^2}\sqrt{\sum_j (D_{jj})^2} \le \sqrt{\tr(A^2)}\sqrt{\tr(D^2)}$.
This completes the proof.
\end{proof}

\section{Appendix: Schwinger-Dyson computations}
\label{SD}


Consider a product $X$ of $N$ by $N$ complex matrices that includes some number of instances of a unitary matrix $U$ and an equal number of instances of its adjoint.  Let $M$ be an arbitrary complex $N$ by $N$ matrix, and consider the expression $\expec[\tr((M+M^\dagger)X)]$ where $E$ means averaging over all $U$ in $U(N)$ with respect to Haar measure.  This expression is invariant under a re-parametrization of $U$: $U\rightarrow e^{i \epsilon (M+M^\dagger)} U \approx (1+i\epsilon (M+M^\dagger))U$, where $\epsilon$ is infinitesimal.  Expanding to linear order in $\epsilon$ we get several terms summing to $0$.  Now multiply each term by $\exp(-\tr(M^\dagger M) / 2)$ and integrate over $M$.  The result is again several terms summing to $0$, with each term a product of traces of the original matrices appearing in $X$.  This is an example of a Schwinger-Dyson equation.  More generally, $X$ can also include additional factors of traces of products of matrices containing equal numbers of $U$'s and $U^\dagger$'s.

The following type of expression appears a lot when we perform the above steps:

\begin{align} \label{apeq1}
\int dM  &\left[ \tr (M+M^\dagger)A(M+M^\dagger)B\right] \exp\left(-\frac{1}{2}\tr MM^\dagger\right) = \\ \nonumber
& 2 \int dM \left[\tr MA M^\dagger B\right] \exp\left(-\frac{1}{2}\tr MM^\dagger\right) = c (\tr A) (\tr B)
\end{align}
where $c$ is an unimportant order $1$ constant.  Another useful expression which appears is

\begin{align} \label{apeq2}
\int dM  &\left[\tr (M+M^\dagger) A \right]  \left[\tr (M+M^\dagger)B \right] \exp\left(-\frac{1}{2}\tr MM^\dagger\right) = \\ \nonumber
& 2 \int dM \left[\tr M A \right] \left[\tr M^\dagger B \right] \exp\left(-\frac{1}{2}\tr MM^\dagger\right) = c \tr AB
\end{align}
Let us now use these to derive the Schwinger Dyson equations that are used in the paper.  Recall that $P = UP_0 U^\dagger$.  Let us define $H=M+M^\dagger$.  To simplify notation, let us use round brackets around operators to denote trace.  Let us start with $X=P \rho$.  Then

\begin{align}
(HP\rho) &\rightarrow (H(1+i\epsilon H) P (1-i\epsilon H)\rho) \nonumber \\ 
&= (HP\rho) + i\epsilon \left[(H H P \rho) - (HPH\rho) \right] \nonumber
\end{align}
and the sum of terms linear in $\epsilon$ will vanish under the expectation value (integration over $U$).  Let us now multiply these terms by $\exp\left(-\frac{1}{2}\tr MM^\dagger\right)$ and integrate over $M$.  The result, using \cref{apeq1}, is $\expec[N(P \rho)-(P)(\rho)]=0$, which, using $(P)=N/2$, amounts to $\expec[(P \rho)]=1/2$.  The same computation with $\rho$ replaced by $\rho^2$ gives $\expec[(P \rho^2)]=(\rho^2)/2$.

Now let us take $X=P \rho P \rho$.  Then
\begin{align}
(HP\rho P\rho) &\rightarrow (H(1+i\epsilon H) P (1-i\epsilon H) \rho (1+i \epsilon H) P (1-i \epsilon H) \rho) \nonumber \\
&= (HP\rho P\rho)+ i \epsilon \left[ (HHP\rho P\rho)-(HPH\rho P\rho) + (HP\rho HP\rho) + (HP\rho PH\rho)\right] \nonumber
\end{align}
and the sum of terms linear in $\epsilon$ will vanish under the expectation value (integration over $U$).  Let us now multiply these terms by $\exp\left(-\frac{1}{2}\tr MM^\dagger\right)$ and integrate over $M$.  The result, using \cref{apeq1}, is

\begin{align}
\expec[N(P\rho P\rho) - (P) (\rho P \rho) + (P\rho)(P\rho) - (P\rho P)(\rho)]=0. \nonumber
\end{align}
Using $(P)=N/2$, $(\rho)=1$, $\expec[(\rho P \rho)] = \expec[(P \rho^2)] = (\rho^2)/2$, this simplifies to 

\begin{align}\label{newap1}
\expec[(P\rho P\rho)]=(\rho^2)/4+\frac{1}{N} \left(-\expec[(P \rho) (P\rho)]+1/2\right).
\end{align}
Now take $X= P\rho (P \rho)$.  Then
\begin{align}
(HP\rho) (P\rho) &\rightarrow (H(1+i\epsilon H)P(1-i\epsilon H) \rho)((1+i\epsilon H) P (1-i\epsilon H) \rho) \nonumber \\
&= (HP\rho) (P\rho) + i \epsilon \left[(HHP\rho)(P\rho)-(HPH\rho)(P\rho)+(HP\rho)(HP\rho)-(HP\rho)(PH\rho) \right] \nonumber
\end{align}
Again, multiplying by $\exp\left(-\frac{1}{2}\tr MM^\dagger\right)$ and integrating over $M$ gives, using  \cref{apeq1,apeq2}:
\begin{align}
\expec[N(P\rho)(P\rho)-(P)(\rho)(P\rho)+(P\rho P\rho)-(P\rho \rho P)]=0. \nonumber
\end{align}
Using $(P)=N/2$, $(\rho)=1$, $(P\rho)=1/2$, $(P\rho^2 P) = (P\rho^2)=(\rho^2)/2$ this becomes 
\begin{align}\label{newap2}
\expec[(P\rho)(P\rho)]=1/4 + \frac{1}{N}\left( - (P\rho P\rho)+ (\rho^2)/2 \right)
\end{align}

\subsection{Computation of the expected value of the purity}

Let us derive \cref{postsel} of \cref{secMBD}.  
There are three terms to compute, given in \cref{betterseries}. 

\subsubsection*{\it First term---}

Inserting \cref{newap2} into \cref{newap1} and keeping only terms up to order $N^{-1}$ we obtain the first term:
\begin{align} \label{newap3}
\expec[(P\rho P\rho)]=(\rho^2)/4+\frac{1}{4N} + O(N^{-2})
\end{align}

\subsubsection*{\it Second term---}

The second term is $\expec[(P\rho P\rho)\delta]= \expec[(P\rho P\rho)((P\rho)-1/2)]$.  To compute this, let $X= P\rho P\rho (P\rho)$.  Then

\begin{align}
(HP\rho P\rho)&(P\rho) \rightarrow (HP\rho P\rho)(P\rho) + i\epsilon\left[ (HHP\rho P\rho)(P\rho)-(HPH\rho P \rho)(P\rho)+(HP\rho HP \rho)(P\rho) \right] \nonumber \\
& + i \epsilon \left[ - (HP\rho PH \rho) (P \rho) + (HP\rho P \rho)(HP \rho)-(HP\rho P\rho)(PH \rho) \right] \nonumber
\end{align}
Multiplying by $\exp\left(-\frac{1}{2}\tr MM^\dagger\right)$ and integrating over $M$ gives, using \cref{apeq1,apeq2}:

\begin{align}
\expec[N(P\rho P\rho)(P\rho) - (P)(\rho P \rho)(P \rho)+(P\rho)(P\rho)(P\rho)-(P\rho P)(\rho)(P\rho)+ (P\rho P \rho P\rho)-(P\rho P \rho \rho P)]=0. \nonumber
\end{align}
Using $(P)=N/2$, $(\rho)=1$ this can be re-written as:

\begin{align}
\expec[(P\rho P\rho)(P\rho)]=\frac{1}{2} (P\rho^2)(P\rho)+\frac{1}{N}\left[\frac{1}{8} + \frac{1}{8}(\rho^3)\right] + O(N^{-2})\nonumber
\end{align}
Now, we only care about the terms in brackets to zeroth order in $1/N$, so we can repeatedly use Schwinger-Dyson equations to turn each $P$ in each term in brackets to a $1/2$.   Note that the trace of a product of an arbitrary number of $P$'s and various powers of $\rho$ is always bounded by $1$, as can be seen by working in the eigenbasis of $\rho$, inserting complete sets of states between all the terms, and using the fact that $\langle i | P | j \rangle \leq 1$ for any unit vectors $|i\rangle, |j\rangle$.  Thus we can get uniform bounds on the terms at order $N^{-k}$, $k\geq 1$, and hence are justified in dropping them.  The result is:
\begin{align} \label{newap4}
\expec[(P\rho P\rho)(P\rho)]=\frac{1}{2} (P\rho^2)(P\rho)+\frac{1}{N}\left[-(P\rho)^3+(P\rho)^2-(P\rho P\rho P\rho)+(P\rho P\rho^2)\right]
\end{align}
Now let $X= P\rho^2 (P\rho)$.  Then
\begin{align}
(HP\rho^2)&(P\rho) \rightarrow (HP\rho^2)(P\rho) + i\epsilon\left[(HHP\rho^2)(P\rho)-(HPH\rho^2)(P\rho)+(HP\rho^2)(HP\rho)-(HP\rho^2)(PH\rho) \right] \nonumber
\end{align}
Multiplying by $\exp\left(-\frac{1}{2}\tr MM^\dagger\right)$ and integrating over $M$ gives, using \cref{apeq1,apeq2}:

\begin{align}
\expec[N(P\rho^2)(P\rho) - (P)(\rho^2)(P \rho)+(P \rho^2 P \rho)-(P\rho^3 P)]=0 \nonumber
\end{align}
which can be rearranged to

\begin{align} 
\expec[(P\rho^2)(P\rho)] = \frac{1}{4}(\rho^2) + \frac{1}{N}\left[-(P\rho^2 P\rho) + \frac{1}{2} (\rho^3) \right] \nonumber
\end{align}
Again replacing all the $P$'s with $1/2$'s in the term in brackets gives
\begin{align}
\expec[(P\rho^2)(P\rho)] = \frac{1}{4}(\rho^2) + \frac{1}{4N} (\rho^3) + O(N^{-2})\nonumber
\end{align}
Inserting this into \cref{newap4} we obtain 
\begin{align}
\expec[(P\rho P\rho)(P\rho)] = \frac{1}{8} (\rho^2) + \frac{1}{N}\left[ \frac{1}{4}(\rho^3)+\frac{1}{8}\right].\nonumber
\end{align}
Subtracting $\expec[(P\rho)(P\rho)/2]= (\rho^2)/8 + 1/(8N)$ (see \cref{newap3}) we then obtain
\begin{align} \label{newap5}
\expec[(P\rho P\rho)\delta] = \frac{1}{4N} (\rho^3)
\end{align}

\subsubsection*{\it Third Term---}

The third term is $\tr(P \rho P \rho) \delta^2$.  To leading order in $N$, this is $\expec[(P\rho P\rho)] \cdot \expec[\delta^2]=\frac{1}{4} (\rho^2) \cdot \expec[\delta^2]$, using the same arguments as in the calculation of the second term.  We have $\expec[\delta^2]=(1/2)N^{-1}(\rho^2)-N^{-1}\expec[(P\rho P\rho)] = (1/4)N^{-1} (\rho^2) + O(N^{-2})$ using \cref{delsquared,newap3}.  So

\begin{align} \label{newap6}
\expec[\tr(P \rho P \rho) \delta^2]=\frac{1}{16N}(\rho^2)^2
\end{align}

Thus, including the first three terms in the series we have, using \cref{newap3,newap5,newap6}:
\begin{align}\label{derivedsum}
\expec \left[\frac{\tr P \rho P \rho}{\tr(P \rho)^2}\right]&=\expec \left[4\tr P\rho P\rho-16(\tr P \rho P \rho) \delta + 48 (\tr P\rho P\rho)\delta^2\right] \\
&=(\rho^2)+N^{-1}-4N^{-1}(\rho^3)+3N^{-1}(\rho^2)^2+\ldots
\end{align}
which is \cref{postsel} of \cref{secMBD}.  These same three terms enter into \cref{meas}:

\begin{align}\label{derivedmeas}
\expec \left[2 \frac{\tr P\rho P\rho}{\tr P\rho}\right] &=\tr \rho^2+\frac{1}{N}\left[1-2(\tr \rho^3)+(\tr \rho^2)^2\right] + \ldots
\end{align}

\subsection{Schwinger-Dyson computation of the noise term}

We now derive the noise for the post-selected and measurement cases, \cref{noise1,noise2} in \cref{secMBD}.  First let us treat the post-selected case, and derive \cref{noise1}.  We have:

\begin{align} \label{noisesum}
\expec \left[\frac{(P\rho P\rho)^2}{(P\rho)^4}\right] &= \expec \left[\frac{16(P\rho P\rho)^2}{(1+2\delta)^4}\right] \nonumber \\
&= \expec \left[16(P\rho P\rho)^2 \left( 1-8 \delta + 40 \delta^2 + \ldots \right) \right]
\end{align}
so we just need to compute the expectation values of $(P\rho P\rho)^2$, $(P\rho P\rho)^2 \delta$, and $(P\rho P\rho)^2 \delta^2$.  We use the same Schwinger-Dyson methods as above, but give fewer details here.

\subsubsection*{\it Computing the first term: $\expec[(P\rho P\rho)^2]$}

First take $X=(P\rho P\rho)(P\rho P\rho)$.  We obtain:

\begin{align} \label{SD1}
(P\rho P\rho)(P\rho P \rho) &= \frac{1}{2}(P\rho^2)(P\rho P\rho) \nonumber \\ &+ \frac{1}{N}[-(P\rho)^2(P\rho P\rho)+(P\rho)(P\rho P\rho)-2(P\rho P\rho P\rho P\rho)+2(P\rho P\rho^2 P\rho)]
\end{align}
For the terms with coefficient $N^{-1}$ above, we only need to compute them to order $N^0$.  This can be done by repeated use of Schwinger-Dyson equations in which only two terms are of order $N$: essentially, these equations allow us to repeatedly replace $P$ with $\frac{1}{2}$.  We obtain, to order $N^0$:

\begin{align}
\expec[(P\rho)^2(P\rho P\rho)] &= \frac{1}{16} (\rho^2) \nonumber \\
\expec[(P\rho)(P\rho P\rho)] &= \frac{1}{8} (\rho^2) \nonumber \\
\expec[(P\rho P\rho P\rho P\rho)] &= \frac{1}{16} (\rho^4) \nonumber \\
\expec[(P\rho P\rho P\rho^2)] &= \frac{1}{8}(\rho^4) \nonumber
\end{align}
and when we plug these into \cref{SD1} get that to order $N^{-1}$,

\begin{align} \label{SD3}
\expec[(P\rho P\rho)(P\rho P\rho)]= \frac{1}{2}\expec[(P\rho^2)(P\rho P\rho)] + \frac{1}{N}\left(\frac{1}{16}(\rho^2) + \frac{1}{8}(\rho^4)\right)
\end{align}
Now we take $X=(P\rho^2)(P\rho P\rho)$.  We obtain

\begin{align} \label{SD2}
(P\rho^2)(P\rho P\rho) = \frac{1}{2}(\rho^2)(P\rho P\rho) + \frac{1}{N}\left( -2(P\rho^2 P\rho P\rho)+2(P\rho^3 P\rho)\right)
\end{align}
Again, to order $N^0$ we have:

\begin{align}
\expec[(P\rho^2 P\rho P\rho)] &= \frac{1}{8} (\rho^4) \nonumber \\
\expec[(P\rho^3 P\rho)] &= \frac{1}{4} (\rho^4) \nonumber
\end{align}
so plugging into \cref{SD2} we obtain to order $N^{-1}$:
\begin{align} \label{SD4}
\expec[(P\rho^2)(P\rho P\rho)] = \frac{1}{2} \expec[(\rho^2)(P\rho P\rho)] + \frac{1}{N}\left( \frac{1}{4}(\rho^4) \right).
\end{align}
Taking $X= (P\rho P\rho)(\rho^2)$ and performing similar steps gives, to order $N^{-1}$:
\begin{align} \label{SD5}
\expec[(P\rho P\rho)(\rho^2)] = \frac{1}{4} (\rho^2)^2 + \frac{1}{4N} (\rho^2).
\end{align}
Combining \cref{SD3,SD4,SD5} we obtain that to order $N^{-1}$,

\begin{align} \label{SD6}
\expec[(P\rho P\rho)^2]=\frac{1}{16}(\rho^2)^2 + N^{-1} \left(\frac{1}{8} (\rho^2) + \frac{1}{4} (\rho^4)\right)
\end{align}

\subsubsection*{\it Computing the second term: $\expec[(P\rho P\rho)^2 \delta]$}

First take $X=(P\rho)(P\rho P\rho)^2$.  The Schwinger-Dyson equation to order $N^{-1}$ is:

\begin{align}
\expec[(P\rho)(P\rho P\rho)^2] &= \frac{1}{2}\expec[(P\rho P\rho)^2 + \frac{4}{N}\left( (P\rho^2 P\rho)(P\rho P\rho) - (P\rho P\rho P\rho)(P\rho P\rho)\right)] \nonumber \\
&=  \frac{1}{2}\expec[(P\rho P\rho)^2] + \frac{4}{N}\left( \frac{1}{4}(\rho^3)\frac{1}{4}(\rho^2) -\frac{1}{8} (\rho^3) \frac{1}{4} (\rho^2)\right)\nonumber \\
&= \frac{1}{2}\expec[(P\rho P\rho)^2] + \frac{1}{N} \left( \frac{1}{8} (\rho^3) (\rho^2) \right).\nonumber
\end{align}
Recalling that $\delta=(P\rho)-\frac{1}{2}$, we therefore have that to order $N^{-1}$,
\begin{align} \label{SD7}
\expec[(P\rho P\rho)^2 \delta] =  \frac{1}{8N}(\rho^3)(\rho^2).
\end{align}

\subsubsection*{\it Computing the third term: $\expec[(P\rho P\rho)^2 \delta^2]$}

To order $N^{-1}$, $\expec[(P\rho P\rho)^2 \delta^2]$ is simply the product of $\expec[(P\rho P\rho)^2]$ and $\expec[\delta^2] = \frac{1}{4} N^{-1} (\rho^2)$:

\begin{align} \label{SD8}
\expec[(P\rho P\rho)^2 \delta^2] = \frac{1}{16}(\rho^2)^2 \frac{1}{4} N^{-1} (\rho^2) = \frac{1}{64} N^{-1} (\rho^2)^3
\end{align}
Inserting \cref{SD6,SD7,SD8} into \cref{noisesum} we therefore see that, to order $N^{-1}$

\begin{align} 
\expec \left[\frac{(P\rho P\rho)^2}{(P\rho)^4}\right] &= (\rho^2)^2 + \frac{1}{N} \left[ 2(\rho^2) + 4(\rho^4) - 16(\rho^3)(\rho^2) + 10(\rho^2)^3 \right]
\end{align}
On the other hand, using our previous computation we have that, to order $N^{-1}$:

\begin{align}
\expec \left[\frac{P\rho P\rho}{(P \rho)^2}\right]^2 = (\rho^2)^2 + \frac{2}{N}(\rho^2)(1-4(\rho^3)+3(\rho^2)^2)
\end{align}
Subtracting these two equations we therefore see that to order $N^{-1}$

\begin{align} \label{apnoise1}
\expec[\frac{(P\rho P\rho)^2}{(P\rho)^4}]-\expec[\frac{(P\rho P\rho)}{(P \rho)^2}]^2 &= \frac{4}{N} \left( (\rho^4)-2(\rho^3)(\rho^2) + (\rho^2)^3\right)
\end{align}
which is just \cref{noise1} in \cref{secMBD}.

Now let us do the computation in the case of measurement.  Here we want to average over both the measurement outcomes and the unitaries $U$.  Thus we want to compute

\begin{align}
E &\left[ (P\rho) \frac{(P\rho P\rho)^2}{(P\rho)^4}+ ((I-P)\rho) \frac{((I-P)\rho(I-P)\rho)^2}{((I-P)\rho)^4}\right] 
- \expec \left[2 \frac{(P\rho P\rho)}{(P\rho)} \right]^2= \nonumber \\
& \expec \left[ 2 \frac{(P\rho P\rho)^2}{(P\rho)^3} \right] - \expec \left[2 \frac{(P\rho P\rho)}{(P\rho)} \right]^2
\end{align}
Writing the denominator $(P\rho) = \frac{1}{2} + \delta$ and expanding in $\delta$ as usual, we obtain
\begin{align}
\expec \left[ 2 \frac{(P\rho P\rho)^2}{(P\rho)^3} \right] = 16 (P\rho P\rho)^2 \left[1-6\delta + 24\delta^2 + O(\delta^3)\right] \nonumber
\end{align}
By the same argument as in \cref{lemma1} the $O(\delta^3)$ terms add up to an error that is asymptotically smaller than $1/N$, so we just need to compute the first 3 terms.  Using \cref{SD6,SD7,SD8} these add up to

\begin{align}
\expec \left[ 2 \frac{(P\rho P\rho)^2}{(P\rho)^3} \right] = (\rho^2)^2+ \frac{1}{N} \left[2 (\rho^2)+ 4(\rho^4)-12(\rho^2)(\rho^3)+ 6(\rho^2)^3 \right] \nonumber
\end{align}
On the other hand, squaring \cref{meas} gives, to $O(N^{-1})$:
\begin{align}
\expec \left[2 \frac{(P\rho P\rho)}{(P\rho)} \right]^2 = (\rho^2)^2+ \frac{1}{N}\left[2(\rho^2) - 4(\rho^2)(\rho^3)+2(\rho^2)^3\right] \nonumber
\end{align}
Subtracting these two gives
\begin{align} \label{apnoise2}
\expec \left[ 2 \frac{(P\rho P\rho)^2}{(P\rho)^3} \right] - \expec \left[2 \frac{(P\rho P\rho)}{(P\rho)} \right]^2 = \frac{4}{N}\left[(\rho^4)-2(\rho^2)(\rho^3)+(\rho^2)^3\right]
\end{align}

Let us make some comments about higher order contributions to this.  Note that for $\rho = \frac{1}{N} {\bf{1}}$, all of the terms in the above expression are $O(N^{-4})$.  One could ask if there are any $O(N^{-2})$ contributions in this case.  The $N^{-2}$ term will look similar to the above, with a sum of products of traces of powers of $\rho$.  By noting that the right hand side has to vanish identically for $\rho = \frac{1}{N} {\bf{1}}$ (after summing the whole $1/N$ expansion), we see that, among the $N^{-2}$ terms, there cannot be a constant piece.  Hence we conclude that for $\rho$ close to $\frac{1}{N} {\bf{1}}$ (i.e. after a few applications of random $P$'s) the noise will be $O(N^{-3})$.

Let us now compute the noise term in \cref{apnoise2} in the case of a nearly pure state, where $\tr \rho^2 = 1-\epsilon$ with $\epsilon$ small.  We will compute this to second order in $\epsilon$.  In this case $\rho$ must have one large eigenvalue $1-\eta$ ($\eta \ll 1$) and $N-1$ small eigenvalues $\delta_1,\ldots,\delta_{N-1}$, with $\eta = \delta_1 + \ldots+\delta_{N-1}$.  We have $\epsilon = 2 \eta - \eta^2 - (\delta_1^2+\ldots+\delta_{N-1}^2)$, which can be inverted to $O(\epsilon^2)$ to give $\eta=\epsilon/2 + \epsilon^2/8+(\delta_1^2+\ldots+\delta_{N-1}^2)/2$.  Therefore, to order $\epsilon^2$,

\begin{align}
\tr \rho^3 = 1-\frac{3}{2}\epsilon + \frac{3}{8}\epsilon^2 - \frac{3}{2}\left(\delta_1^2+\ldots+\delta_{N-1}^2\right) \nonumber
\end{align}
and since $\delta_1^2 + \ldots + \delta_{N-1}^2 \leq \eta^2 = \epsilon^2 / 4$ to order $\epsilon^2$, we see that 

\begin{align}
1-\frac{3}{2}\epsilon \leq \tr \rho^3 \leq 1- \frac{3}{2}\epsilon + \frac{3}{8} \epsilon^2 \nonumber
\end{align}
A similar computation to order $\epsilon^2$ shows
\begin{align}
\tr \rho^4 = 1-2\epsilon+\epsilon^2 - 2\left(\delta_1^2+\ldots+\delta_{N-1}^2\right) \nonumber
\end{align}
so that
\begin{align}
1-2\epsilon + \frac{1}{2}\epsilon^2 \leq \tr \rho^4 \leq 1-2\epsilon+\epsilon^2 \nonumber
\end{align}
Plugging the upper (lower) bounds into the positive (negative) terms in \cref{apnoise2} shows that, to order $\epsilon^2$,

\begin{align} \label{noisebound}
\expec \left[ 2 \frac{(P\rho P\rho)^2}{(P\rho)^3} \right] - \expec \left[2 \frac{(P\rho P\rho)}{(P\rho)} \right]^2 \leq \frac{4}{N} \epsilon^2
\end{align}

\subsection{Rank 2 case} \label{SDrank2}

When $\rho$ has rank $2$, the various Schwinger-Dyson equations close on themselves.  Indeed, in that case if the eigenvalues of $\rho$ are $x, 1-x$, $0\leq x \leq 1$, then $(\rho^2)=1-2x+x^2$, and

\begin{align}
(\rho^3)=x^3+(1-x)^3=1-3x+3x^2 = \frac{3}{2}(\rho^2)-\frac{1}{2} \nonumber
\end{align}
Therefore, for the case of measurement,

\begin{align}
\expec \left[ \frac{(P\rho P\rho)}{(P\rho)^2} \right] &= (\rho^2) + N^{-1}\left[ 1-4(\rho^3)+(\rho^2)^2 \right]  \nonumber\\
&= (\rho^2) + N^{-1} [(\rho^2)-1][(\rho^2)-2]\nonumber
\end{align}
The resulting time evolution is:

\begin{align}
(\rho^2)= 1 - \frac{1}{3\exp\left(t/N\right) - 1}\nonumber
\end{align}
This can only be proved to be valid for $t \ll N$ however.  Using $(\rho^4) = (\rho^2)^2/2 + (\rho^2) - 1/2$, the $O(N^{-1})$ noise term, from \cref{apnoise2}, is
\begin{align}
\expec \left[ 2 \frac{(P\rho P\rho)^2}{(P\rho)^3} \right] - \expec \left[2 \frac{(P\rho P\rho)}{(P\rho)} \right]^2 = 4N^{-1}\left[ ((\rho^2)-1)^2 ((\rho^2)-\frac{1}{2}) \right]\nonumber
\end{align}
so the noise is very small at the lower end $(\rho^2)=1/2$ and the upper end $(\rho^2)=1$.  A similar computation can be done for the case of post-selection.  The result is that the change of the expectation value of the purity over a single time step is $3((\rho^2)-1)^2$, leading to:

\begin{align}
(\rho^2) = 1-\frac{1}{3t/N+2}\nonumber
\end{align}
Again, we can only trust this result for $t \ll N$.

\section{Appendix: Free fermion computations} \label{apfree}

The notation in this section is as in \cref{conserv}.  We compute the change in $S_{\mathrm{proxy}} (\rho)=(\log 2)(n - \Tr\,\mM^2)$, averaged over measurement outcomes, after measuring the occupation number of the first mode, i.e. the observable $i \gamma_1 \gamma_2$.  To facilitate the computation, let us first do a unitary rotation on modes $2$ through $n$ to diagonalize the lower right $n-1$ by $n-1$ block of $\mM$:

\begin{align}
\mM = \begin{pmatrix} 
    \eta_1 & w_2^* & \dots & w_n^* \\
    w_2 & \eta_2 & 0 & \dots  \\
    \vdots & 0 & \ddots &  \\
    w_n & \vdots &  & \eta_n
    \end{pmatrix}\nonumber
\end{align}
Note that $\Tr\,\mM^2 = \eta_1^2 + 2 \sum_{j=2}^n |w_j|^2 + \sum_{j=2}^n \eta_j^2$.

Let us first consider the case of the measurement outcome being that the first mode is empty (denoted with a plus subscript).  The probability of this outcome is $p_+ = \Tr\, \rho \, a_1 a_1^\dagger = (1+ \eta_1)/2$, so the normalized post-measurement state is $\rho'_+=(a_1 a_1^\dagger) \rho (a_1 a_1^\dagger)/p_+$.  Being a product of Gaussian states, $\rho'_+$ is Gaussian (see e.g. Ref. \cite{Bravyi}), and since pairing correlations still vanish, it is uniquely determined by $(\mM'_+)_{\mu \nu} = 2\, \Tr \, (\rho'_+ \, a_\mu a_\nu^\dagger) - \delta_{\mu\nu}$.  We immediately see that $(\mM'_+)_{1 1} = 1$, $(\mM'_+)_{\mu 1} = (\mM'_+)_{1 \mu} = 0$ for $\mu=2,\ldots, n$.  Using Wick's theorem, we have that for $\mu \geq 2$ and $\nu \geq 2$

\begin{align}
(\mM'_+)_{\mu\nu} &= \frac{2}{p_+} \Tr\, (\rho\, a_1 a_1^\dagger a_\mu a_\nu^\dagger ) -\delta_{\mu \nu} \\
 &= \frac{4}{1+\eta_1} \left[ \la a_1 a_1^\dagger \ra \la a_\mu a_\nu^\dagger \ra - \la a_1 a_\nu^\dagger \ra \la a_\mu a_1^\dagger \ra \right] - \delta_{\mu\nu} \\
 &= \frac{4}{1+\eta_1} \left[ \left(\frac{1+\eta_1}{2}\right) \left(\frac{\mM_{\mu\nu}+\delta_{\mu\nu}}{2}\right) - \frac{\mM_{1\nu}}{2} \frac{\mM_{\mu 1}}{2} \right] - \delta_{\mu\nu} \nonumber \\
 &= \delta_{\mu \nu} \eta_\mu - \frac{1}{1+\eta_1} w_\mu w_\nu^* \nonumber
\end{align}
A similar computation shows that the probability of the first mode being occupied is $p_-=(1-\eta_1)/2$, with the normalized post-measurement state having correlation matrix $(\mM'_-)_{1 1}= -1$, $(\mM'_-)_{\mu 1} = (\mM'_-)_{1  \mu}=0$ for $\mu=2,\ldots, n$, and $(\mM'_-)_{\mu \nu} = \delta_{\mu\nu} \eta_\mu + \frac{1}{1-\eta_1} w_\mu w_\nu^*$ for $\mu \geq 2$ and $\nu \geq 2$.

It is easiest to proceed using the Dirac notation, in which $\mM_+' = |1\ra\la1| + \sum_{j=2}^n \eta_j |j\ra\la j| - \frac{1}{1+\eta_1} |w\ra \la w|$, where $|w\ra = \sum_{j=2}^n w_j |j\ra$.  We then have

\begin{align}
(\mM_+')^2 = |1\ra\la 1| + \sum_{j=2}^n \eta_j^2 |j\ra \la j| - \frac{1}{1+\eta_1}\sum_{j=2}^n \left(\eta_j w_j |j\ra \la w| + \eta_j w_j^* |w\ra \la j| \right) + \frac{1}{(1+\eta_1)^2} |w|^2 |w\ra \la w| \nonumber
\end{align}
so that

\begin{align}
\Tr\, (\mM_+')^2 = 1+\sum_{j=2}^n \eta_j^2 - \frac{2}{1+\eta_1} \sum_{j=2}^n \eta_j |w|^2 + \frac{|w|^4}{(1+\eta_1)^2} \nonumber
\end{align}
Similarly
\begin{align}
\Tr\, (\mM_-')^2 = 1+\sum_{j=2}^n \eta_j^2 + \frac{2}{1-\eta_1} \sum_{j=2}^n \eta_j |w|^2 + \frac{|w|^4}{(1-\eta_1)^2} \nonumber
\end{align}
so that
\begin{align}
p_+ \Tr\, (\mM_+')^2 + p_- \Tr\, (\mM_-')^2 = 1+\sum_{j=2}^n \eta_j^2 + \frac{|w|^4}{1-\eta_1^2}. \nonumber
\end{align}
Thus
\begin{align}
\Delta S_{\mathrm{proxy}} &= (\log 2) \left(\Tr\,(\mM)^2 - p_+ \Tr\, (\mM_+')^2 - p_- \Tr\, (\mM_-')^2\right) \nonumber \\
&= - \frac{\log 2}{1-\eta_1^2} \left( 1- \eta_1^2 - |w|^2 \right)^2 \nonumber \\
&= - \frac{\log 2}{1-(\mM_{11})^2} (1-(\mM^2)_{11})^2 \leq 0 \nonumber
\end{align}

\bibliographystyle{apsrev4-1}
\nocite{apsrev41Control}
\bibliography{avent-ref}
\end{document}